\DeclareFixedFont{\sfracFont}{U}{euf}{b}{n}{7pt}
\newtheoremstyle{mydefi}
  {15pt}
  {15pt}
  {}
  {}
  {\bfseries}
  {:}
  {.5em}
  {}
\newtheoremstyle{mytheo}
  {15pt}
  {15pt}
  {\slshape}
  {}
  {\bfseries}
  {:}
  {.5em}
  {}
\theoremstyle{mytheo}
\newtheorem{theorem}{Theorem}[section]
\newtheorem{lemma}{Lemma}
\newtheorem{proposition}{Proposition}
\theoremstyle{mydefi}
\newtheorem{definition}{Definition}
\newtheorem{example}{Example}
\newtheorem{corollary}{Corollary}
\newtheorem{assumption}{Assumption}
\newtheorem{remark}{Remark}
\begin{document}
\title{Adiabatic elimination in quantum stochastic models}
\author{Luc Bouten and Andrew Silberfarb \\ \\
\normalsize{Physical Measurement and Control 266-33, California Institute 
of Technology,}\\ \normalsize{1200 E.\ California Blvd., Pasadena, CA 
91125, USA}} \date{}
\maketitle

\begin{abstract}
We consider a physical system with a 
coupling to bosonic reservoirs via a quantum 
stochastic differential equation. 
We study the limit of this model as 
the coupling strength tends to infinity. We 
show that in this limit the solution 
to the quantum stochastic differential 
equation converges strongly to the 
solution of a limit quantum stochastic 
differential equation. In the limiting 
dynamics the excited states are removed 
and the ground states couple directly 
to the reservoirs.     
\end{abstract}

\section{Introduction}\label{sec introduction}

It is a frequent occurence in physics 
to have a system that spends a very 
limited amount of time in its excited states.   
This is, for instance, the case if the system 
is strongly coupled to a low temperature 
environment (e.g.\ the optical field). The 
strong coupling ensures that 
excitations above the ground levels 
of the system quickly dissipate 
into its environment. It is therefore 
reasonable to ask for a model in which the 
excited states are eliminated from the description.
That is, we would like to have a description 
that only involves the ground states of a 
system and its environment. The procedure for 
going from the full model to the 
reduced model is called 
\emph{adiabatic elimination}.

We study adiabatic elimination in the context 
of quantum stochastic models \cite{HuP84} which 
arise by taking a weak coupling limit of QED 
(quantum electrodynamics) models \cite{AFLu90,Gou05,DDR06}, 
and are widely applicable to systems studied in 
quantum optics.
Specifically, quantum stochastic models are the 
starting point for deriving master equations, 
filtering equations, and input-output relations.  
In the quantum optics community adiabatic elimination 
is a common technique, used, for instance, 
in atomic systems \cite{WaM94,Coh92,DeJ96,Gar84} 
and in cavity QED models \cite{GaZ00,WiM93a} as 
well as in more recent work on quantum feedback
\cite{DPTW98,DWW97,WCW97}. Rigorous results have been
demonstrated for adiabatic elimination outside of 
the quantum stochastic models we consider 
\cite{Pap76,Dav79,Gar84}. At present, however, apart 
from the work \cite{GoV07} on the elimination of 
a leaky cavity (using a Dyson series expansion to prove 
weak convergence), no rigorous results have 
been obtained on adiabatic elimination in the 
context of the quantum stochastic models introduced 
by Hudson and Parthasarathy \cite{HuP84}.

We start by considering a family, indexed by a parameter 
$k$, of quantum stochastic differential equations 
(QSDE's). The parameter $k$ can be interpreted as the 
coupling strength between the system and its 
environment. The environment is modelled by a 
collection of bosonic heat baths in the vacuum 
representation. We assume that the coefficients 
of the QSDE are all bounded and satisfy the usual conditions 
guaranteeing a unique unitary solution \cite{HuP84}. 
We state further assumptions on the coefficients and 
show that under these assumptions the solution of the QSDE 
converges strongly to the solution of a limiting QSDE as 
$k$ tends to infinity (Theorem \ref{thm main result}). 
The limiting QSDE represents the 
adiabatically eliminated time evolution of the system.  

The heart of the proof is a technique introduced by T.G.~Kurtz 
\cite{Kur73} that enables the application of 
the Trotter-Kato Theorem \cite{Tro58}. 
This allows us to prove strong convergence of the 
unitaries using convergence of generators of semigroups 
rather than convergence of a Dyson series expansion.
Convergence is first shown on the vacuum 
vector of the bosonic reservoirs. 
We then extend this result to any possible vector 
in the Hilbert space of the reservoirs by sandwiching 
the unitaries with Weyl operators and using a 
density argument. 

The remainder of this article is organized as 
follows. In Section \ref{sec main result} 
we introduce the system coupled to $n$ bosonic 
reservoirs in the vacuum representation. We 
state assumptions on the coefficients 
of the QSDE and present the main convergence 
theorem. In Section \ref{sec examples} 
we discuss four applications of the theorem 
in the context of examples from atomic physics 
and cavity QED.
Section \ref{sec proof of main result} presents 
the proof of the main convergence theorem.
In Section \ref{sec discussion} we discuss 
our results.

\section{The main result}\label{sec main result}

Let $\mathcal{H}$ be a Hilbert space and let $n$ be 
an element of $\mathbb{N}$. Let $\mathcal{F}$ 
be the \emph{symmetric Fock space} over 
$\mathbb{C}^{n}\otimes L^2(\mathbb{R}^+) 
\cong L^2(\mathbb{R}^+;\mathbb{C}^n)$, i.e.
  \begin{equation*}
  \mathcal{F} = \mathbb{C} \oplus \bigoplus_{m=1}^\infty 
  L^2(\mathbb{R}^+;\mathbb{C}^n)^{\otimes_s m}.
  \end{equation*}
Physically, the Hilbert space $\mathcal{H}\otimes\mathcal{F}$ 
describes a system $\mathcal{H}$ coupled to $n$ 
bosonic reservoirs (e.g.\ $n$ decay channels in the 
quantized electromagnetic field).  
For $f \in L^2(\mathbb{R}^+;\mathbb{C}^n)$, we 
define the \emph{exponential vector} $e(f)$ in 
$\mathcal{F}$ by 
  \begin{equation*}
  e(f) = 1\oplus \bigoplus_{m=1}^\infty \frac{f^{\otimes m}}{\sqrt{m!}}.
  \end{equation*}
Moreover, we define the \emph{coherent vector} 
$\pi(f)$ to be the exponential vector $e(f)$ 
normalized to unity, i.e.\ 
$\pi(f) = \exp(-\frac{1}{2}\|f\|^2)e(f)$.  
The \emph{vacuum vector} is defined to be
the exponential vector 
$\Phi = e(0) = 1\oplus 0\oplus 0\ldots$.
The expectation with respect to the vacuum 
vector is denoted by $\phi$, i.e.\ $\phi$ 
is a map from $\mathcal{B}(\mathcal{F})$ 
(the bounded operators on $\mathcal{F}$) to 
$\mathbb{C}$, given by $\phi(W) = \langle \Phi, W\Phi\rangle$ 
for all $W \in \mathcal{B}(\mathcal{F})$.

The interaction between the system and the 
bosonic reservoirs is modelled by a quantum 
stochastic differential equation (QSDE) 
in the sense of Hudson and Parthasarathy \cite{HuP84} 
of the form
  \begin{equation}\label{eq HuP}\begin{split}
  dU^{(k)}_t = \Bigg\{\Big(S^{(k)}_{i j}-\delta_{ij}\Big)& d \Lambda^{ij}_t + 
  L_i^{(k)}dA^{i\dagger}_{t} - L_i^{(k)\dagger} S^{(k)}_{i j}dA^j_t\ 
  + K^{(k)}dt\Bigg\}U^{(k)}_t, 
  \end{split}\end{equation}
where $U^{(k)}_0 =I$. We consistently 
use the convention that repeated 
indices that are not within parentheses 
are being summed ($i$ and $j$ run 
through $\{1,\ldots,n\}$). The Hilbert 
space adjoint is denoted by a dagger $^\dagger$. 
We have indexed the equation with a positive 
number $k$, and in the following we 
will be interested in the behaviour 
of $U_t^{(k)}$ as $k$ tends to infinity.
We assume that the following conditions on the coefficients 
of the QSDE are satisfied.

\begin{assumption}\label{as 1}
For each $k \ge 0$, the coefficients 
$K^{(k)}, S^{(k)}_{ij}$ and $L^{(k)}_i$ $(i,j \in \{1,\ldots,n\})$ 
of the quantum stochastic differential equation \eqref{eq HuP}
are bounded operators on $\mathcal{H}$. Furthermore, for each $k\ge 0$,
the coefficients satisfy the following relations
  \begin{equation*}\label{eq unit conds}
  K^{(k)}+K^{(k)\dag} = -L_i^{(k)\dag}L_i^{(k)}, 
  \qquad S^{(k)}_{i l}S^{(k)\dag}_{j l} = \delta_{ij}I,
  \qquad S^{(k)\dag}_{l i} S^{(k)}_{l j} = \delta_{ij}I.
  \end{equation*}
\end{assumption}
Hudson and Parthasarathy \cite{HuP84} show that under 
Assumption \ref{as 1}, the quantum stochastic 
differential equation \eqref{eq HuP} has a 
unique unitary solution $U_t^{(k)}$, and, the 
adjoint $U_t^{(k)\dag}$ satisfies the adjoint of 
Eq.\ \eqref{eq HuP}.

\begin{assumption}\label{as 2}
There exist bounded operators $Y,A,B,F_i,G_i$ 
and $W_{ij}$ (independent of $k$) 
on $\mathcal{H}$ such that 
  \begin{equation*}
  K^{(k)} = k^2Y+kA+B, \qquad L_i^{(k)} = kF_i + G_i, \qquad S^{(k)}_{ij} = W_{ij},
  \end{equation*}
for all $i,j \in \{1,\ldots,n\}$.
\end{assumption}
We define $P_0$ as the orthogonal projection 
onto $\mbox{Ker}(Y)$. Let $P_1 = I-P_0$ 
be its complement in $\mathcal{H}$. We use the 
following notation $\mathcal{H}_0 = P_0\mathcal{H}$ 
and $\mathcal{H}_1 = P_1\mathcal{H}$. Physically, one should 
think of $\mathcal{H}_0$ as the ground states 
and of $\mathcal{H}_1$ as the 
excited states of the system. 

\begin{assumption}\label{as 3}
There exists a bounded operator $Y_1^{-1}$ on $\mathcal{H}$ 
such that $P_1Y_1^{-1} = Y_1^{-1}P_1$ and  
  \begin{equation}\label{eq condYinv}\begin{split}
   YY_1^{-1}P_1ZP_0 =P_1ZP_0,\qquad
   P_0XP_1Y_1^{-1}Y = P_0XP_1,  
  \end{split}\end{equation}
where $Z = A, F_i^\dag W_{ij},\ (j \in \{1,\dots,n\})$ and 
$X = A,B,F_i,G_i,W_{ij}, G_i^\dag W_{ij}, F_iY_1^{-1}F_j,$ 
$F_iY_1^{-1}A, F_iY_1^{-1}F_l^\dag W_{lj}, AY_1^{-1}A, 
AY_1^{-1}F_i, AY_1^{-1}F_l^\dag W_{lj}$, $(i,j \in \{1,\dots,n\})$.
Moreover, for all $i,j \in \{1,\ldots, n\}$ the following 
products are zero 
 \begin{equation*}
  P_0YP_1= P_0AP_0 = F_iP_0 = P_0(\delta_{il} + F_i Y^{-1}_1 F^\dag_{l})W_{lj}P_1 = 0.
  \end{equation*}  
\end{assumption}
Note that the existence of $Y_1^{-1}$ satisfying 
the assumptions in Eq.\ \eqref{eq condYinv} is 
immediate if $Y$ maps $\mathcal{H}_1$ surjectively 
onto $\mathcal{H}_1$ and is therefore invertible on 
$\mathcal{H}_1$. 

\begin{definition}\label{def coefficients limitQSDE}
Suppose Assumption \ref{as 2} and \ref{as 3} hold. 
We define for all $i,j \in \{1,\ldots,n\}$ 
the following bounded operators 
on $\mathcal{H}$
  \begin{equation*}\begin{split}\label{eq limitHuP}
    &K = P_0\left(B - A  Y_1^{-1}  A \right)P_0,\qquad
    L_i = \left(G_i - F_i Y^{-1}_1 A \right) P_0,\\
    &S_{ij} = \left( \delta_{i l} + F_i Y_1^{-1} F_l^\dag \right)W_{lj} P_0.
  \end{split}\end{equation*}
\end{definition}

\begin{assumption}\label{as 4}
For all $i,j \in \{1,\ldots, n\}$ the following products are zero 
 \begin{equation*}
  P_1 L_i  = P_1 S_{ij} =0.
  \end{equation*}
\end{assumption} 

\begin{lemma}\label{lem unitarity}
Suppose that Assumption \ref{as 1}, \ref{as 2}, \ref{as 3} 
and \ref{as 4} hold. The operators in Definition 
\ref{def coefficients limitQSDE} satisfy 
\begin{equation*}
	K+K^{\dag} = -L_i^{\dag}L_i, \qquad  S_{il}S^\dag_{jl} = \delta_{ij}P_0, 
	\qquad S^\dag_{l i} S_{l j} = \delta_{i j} P_0.
\end{equation*}
\end{lemma}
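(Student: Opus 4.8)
The plan is to verify the three identities by direct computation, substituting the definitions from Definition \ref{def coefficients limitQSDE} and using the relations in Assumption \ref{as 1} (for the original coefficients), the invertibility relations \eqref{eq condYinv}, and the vanishing products in Assumptions \ref{as 3} and \ref{as 4}. The key observation that makes this tractable is that all three limit operators $K$, $L_i$, $S_{ij}$ are already right-multiplied by $P_0$, so every product we form lives between $P_0$'s; moreover $Y_1^{-1}$ commutes with $P_1$, and the combinations $F_i Y_1^{-1}$, $A Y_1^{-1}$ effectively act on the $\mathcal{H}_1$-component, so the identities in \eqref{eq condYinv} let us collapse factors of $YY_1^{-1}$ and $Y_1^{-1}Y$ whenever they appear sandwiched appropriately.

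First I would treat the isometry relations for $S$. Writing $S_{ij} = T_{il}W_{lj}P_0$ with $T_{il} = \delta_{il} + F_i Y_1^{-1}F_l^\dag$, one computes $S_{il}S_{jl}^\dag = T_{im}W_{ml}P_0 W_{pl}^\dag T_{jp}^\dag$; the obstacle here is that $P_0$ sits in the middle, so one cannot immediately use $W_{ml}W_{pl}^\dag = \delta_{mp}I$. I would resolve this using the last vanishing relation in Assumption \ref{as 3}, namely $P_0 T_{il} W_{lj} P_1 = 0$, i.e.\ $P_0 T_{il}W_{lj} = P_0 T_{il}W_{lj}P_0$, which lets one insert or remove the central $P_0$ freely when it is preceded by $P_0 T$; after that the unitarity $W_{ml}W_{pl}^\dag = \delta_{mp}I$ applies and one is left with $P_0 T_{im}T_{jm}^\dag P_0$, which I expand and simplify using $F_iP_0 = 0$, the self-adjointness structure, and \eqref{eq condYinv} to kill the $YY_1^{-1}$ factors. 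The relation $S_{li}^\dag S_{lj} = \delta_{ij}P_0$ is handled analogously, now using the other unitarity relation $W_{li}^\dag W_{lj} = \delta_{ij}I$ together with $W_{lj}P_0\mathcal{H}\subseteq\ldots$ and Assumption \ref{as 4}'s $P_1 S_{ij} = 0$ to control the central projection.

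Next I would verify $K + K^\dag = -L_i^\dag L_i$. Expanding $L_i^\dag L_i = P_0(G_i^\dag - A^\dag Y_1^{-1\dag}F_i^\dag)(G_i - F_iY_1^{-1}A)P_0$ and $K + K^\dag = P_0(B + B^\dag - A Y_1^{-1}A - A^\dag Y_1^{-1\dag}A^\dag)P_0$, the goal is to match the two sides term by term; the main input is the first relation of Assumption \ref{as 1} expanded order by order in $k$, which gives $Y + Y^\dag = -F_i^\dag F_i$, $A + A^\dag = -(F_i^\dag G_i + G_i^\dag F_i)$, and $B + B^\dag = -G_i^\dag G_i$. I expect the bookkeeping to be delicate: one must repeatedly use \eqref{eq condYinv} to turn expressions such as $P_0 A^\dag Y_1^{-1\dag}F_i^\dag F_i Y_1^{-1}A P_0$ into manageable form (recognizing $F_i^\dag F_i = -(Y+Y^\dag)$ and then collapsing one $Y_1^{-1}Y$ against the adjacent factor), and use the vanishing products $P_0 A P_0 = F_iP_0 = 0$ to discard boundary terms. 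The main obstacle throughout is precisely this: ensuring that every factor of $Y_1^{-1}$ is adjacent to a $Y$ (or $Y^\dag$) in the correct slot so that \eqref{eq condYinv} can be applied, which is why Assumption \ref{as 3} lists exactly that long list of operators $Z$ and $X$ — each term generated in the expansion is covered by one entry in that list, and the proof amounts to checking this matching is exhaustive.
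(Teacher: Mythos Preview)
Your proposal is correct and follows essentially the same route as the paper: expand the unitarity relation in Assumption~\ref{as 1} order by order in $k$ to extract $Y+Y^\dag=-F_i^\dag F_i$, $P_1(A+A^\dag)P_0=-P_1F_i^\dag G_iP_0$, and $P_0(B+B^\dag)P_0=-P_0G_i^\dag G_iP_0$, and then verify each of the three identities by direct substitution, collapsing $Y_1^{-1}Y$ and $YY_1^{-1}$ via \eqref{eq condYinv} and using the vanishing products in Assumptions~\ref{as 3} and~\ref{as 4}. One minor simplification relative to your plan: for $S_{li}^\dag S_{lj}$ there is in fact no central $P_0$ to control (both projections already sit on the outside), so the paper handles that direction using only $-F_i^\dag F_i=Y+Y^\dag$; Assumption~\ref{as 4} is invoked only for the $S_{il}S_{jl}^\dag$ direction.
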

\begin{proof}
By Assumptions \ref{as 1} and \ref{as 2} we have
$K^{(k)} + K^{(k)} = -L_i^{(k)\dag}L_i^{(k)},\ K^{(k)} = k^2Y+kA+B$ 
and $L^{(k)}_i = kF_i +G_i$ for all $k\ge 0$. Moreover, $F_iP_0 =0$, 
by Assumption \ref{as 3}. 
Combining these results leads to
\begin{align}
&-F_i^\dag F_i = Y + Y^\dag \nonumber
 \\& -P_1F_i^\dag G_i P_0 = P_1 (A + A^\dag) P_0 \label{eq split}
 \\&-P_0 G_i^\dag G_i P_0 = P_0 (B + B^\dag) P_0. \nonumber
 \end{align} 
We then use $YY_1^{-1}AP_0 =AP_0$ from Assumption \ref{as 3} and $L_i$ from
Definition \ref{def coefficients limitQSDE} to derive
 \begin{align*}
 	-L_i^\dag L_i &= -P_0(G_i^\dag - A^\dag  Y_1^{-1 \dag} F_i^\dag ) ( G_i - F_i
	 Y_1^{-1}  A)P_0 \\
	&= P_0 (B + B^\dag)P_0 - P_0 A^\dag  Y_1^{-1 \dag} (A + A^\dag)P_0\\
	& \quad - P_0(A+A^\dag) Y^{-1}_1  A P_0 
	  + P_0A^\dag (Y_1^{-1 \dag} + Y_1^{-1}) A P_0\\
	&= P_0 (B+B^\dag) P_0 - P_0 A  Y_1^{-1}  A P_0 \
	- P_0 A^\dag  Y_1^{-1 \dag}  A^\dag P_0\\
	&=P_0 (K + K^\dag)P_0.
\end{align*} 
By Definition \ref{def coefficients limitQSDE}
\begin{equation*}
	S_{ij} = \left( \delta_{i l} + F_i Y_1^{-1} F_l^\dag \right)W_{lj} P_0.
\end{equation*}
Combining this with $-F_i^\dag F_i = Y + Y^\dag$ from above,
\begin{align*}
S_{li}^\dag S_{lj} &= P_0 W^\dag_{m i} \left( \delta_{m l} + F_m  Y_1^{-1 \dag} F_l^\dag \right)
	\left( \delta_{l n} + F_l  Y_1^{-1} F_n^\dag \right)W_{n j} P_0\\
&= P_0 W^\dag_{l i} W_{l j} P_0 = P_0 \delta_{i j}.
\end{align*}
Then use $P_0\left( \delta_{i l} + F_i  Y_1^{-1} F_l^\dag \right)W_{lj} P_1=0$ from Assumption
\ref{as 3} and $P_1 S_{ij} P_0 = 0$ from Asumption \ref{as 4} to derive
\begin{align*}
S_{il} S^\dag_{jl} &=  P_0\left( \delta_{i n} + F_i  Y_1^{-1} F_n^\dag \right)W_{nl}
    W^\dag_{m l} \left( \delta_{m j} + F_m  Y_1^{-1 \dag} F_j^\dag \right) P_0\\
&= P_0\left( \delta_{i n} + F_i  Y_1^{-1} F_n^\dag \right)
    \left( \delta_{n j} + F_n  Y_1^{-1 \dag} F_j^\dag \right) P_0 = \delta_{i j} P_0.
\end{align*}
\end{proof}

The operators given by Definition 
\ref{def coefficients limitQSDE} are 
the coefficients of a QSDE on the Hilbert 
space $\mathcal{H}\otimes\mathcal{F}$
  \begin{equation}\label{eq HuPlimit}\begin{split}
  dU_t = \Bigg\{\Big(S_{i j}-\delta_{ij}P_0 \Big)& d \Lambda^{ij}_t + 
  L_i dA^{i\dagger}_t - L_i^{\dagger} S_{i j}dA^j_t\ 
  + K dt\Bigg\}U_t,\qquad U_0 =I. 
  \end{split}\end{equation}
Lemma \ref{lem unitarity} implies that under Assumptions 
\ref{as 1}, \ref{as 2}, \ref{as 3} and \ref{as 4}, Eq.\ \eqref{eq HuPlimit} 
has a unique unitary solution on $\mathcal{H}$ \cite{HuP84}, 
and, the adjoint $U_t^\dag$ satisfies the adjoint of Eq.\ \eqref{eq HuPlimit}.  
Moreover, $U_t$ maps $\mathcal{H}_0$ to $\mathcal{H}_0$. Note 
that $U_tP_1 =P_1$.

\begin{theorem}\label{thm main result}
Suppose Assumption \ref{as 1}, \ref{as 2}, 
\ref{as 3} and \ref{as 4} hold. Let $U_t^{(k)}$ be the unique 
unitary solution to Eq. \eqref{eq HuP}. Let 
$U_t$ be the unique unitary solution to Eq.\ \eqref{eq HuPlimit} 
where the coefficients are given by Definition 
\ref{def coefficients limitQSDE}. 
Then 
$U_t^{(k)}P_0$ converges strongly to $U_t P_0$, i.e.\
  \begin{equation*}
  \lim_{k\to \infty}U^{(k)}_t \psi = U_t \psi, \ \ \ \
  \forall \psi \in \mathcal{H}_0\otimes \mathcal{F}. 
  \end{equation*}   
\end{theorem}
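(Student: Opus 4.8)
The plan is to treat the $k\to\infty$ limit as a singular perturbation of one-parameter semigroups and to apply the Trotter--Kato theorem in the form used by Kurtz \cite{Kur73}. I would first clear away the field degrees of freedom. Since $\|U_t^{(k)}\|=\|U_t\|=1$ and the vectors $\psi_0\otimes e(f)$, with $\psi_0\in\mathcal{H}_0$ and $f$ a step function, span a dense subspace of $\mathcal{H}_0\otimes\mathcal{F}$, an $\varepsilon/3$-estimate reduces the theorem to proving $U_t^{(k)}(\psi_0\otimes e(f))\to U_t(\psi_0\otimes e(f))$ for such vectors. Using the factorisation $\mathcal{F}\cong\mathcal{F}_{[0,t]}\otimes\mathcal{F}_{[t,\infty)}$ — which both the exponential vectors and the adapted cocycles respect — one may take $f$ supported in $[0,t]$, and conjugating the cocycle by the Weyl operator $W(f)$ turns \eqref{eq HuP} into a QSDE whose coefficients are piecewise constant in time and, on each constant piece, again of the form $k^2Y+kA'+B'$, $kF_i+G_i'$, $W_{ij}$ with the \emph{same} singular parts $Y,F_i$ and only bounded, $k$-independent modifications of the remaining coefficients. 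One checks that the class of Assumptions \ref{as 1}--\ref{as 4}, as well as the adiabatic-elimination map of Definition \ref{def coefficients limitQSDE}, are stable under this conjugation (the Weyl shift touches neither $Y$, $F_i$, nor $P_0,P_1$, and the range identities required follow from $F_iP_0=0$ and Assumption \ref{as 3}). Composing the finitely many time-homogeneous pieces and using $\sup_k\|U_t^{(k)}\|\le1$, the theorem reduces to strong convergence on the vacuum vector for an arbitrary QSDE in the class of Assumptions \ref{as 1}--\ref{as 4}.

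For the vacuum case I would invoke Kurtz's device to turn the cocycle into a semigroup: extend $\mathcal{F}$ to the Fock space $\widehat{\mathcal{F}}$ over $L^2(\mathbb{R};\mathbb{C}^n)$, let $U_t^{(k)}$ act trivially on the negative half-line, let $\widehat\Theta_t$ be the unitary second-quantised time-shift group, and set $V_t^{(k)}=\widehat\Theta_{-t}U_t^{(k)}$. The cocycle identity $U_{s+t}^{(k)}=\widehat\Theta_s U_t^{(k)}\widehat\Theta_{-s}U_s^{(k)}$, together with unitarity of $\widehat\Theta_t$, makes $\{V_t^{(k)}\}_{t\ge0}$ a strongly continuous one-parameter semigroup of unitaries with $U_t^{(k)}=\widehat\Theta_t V_t^{(k)}$; hence strong convergence of $V_t^{(k)}$ on $\mathcal{H}_0\otimes\{\widehat\Phi\}$ gives strong convergence of $U_t^{(k)}$ on the vacuum. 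From the quantum It\^o rules, the generator of $V_t^{(k)}$ decomposes as $\mathcal{Z}^{(k)}=k^2\mathcal{Z}_2+k\mathcal{Z}_1+\mathcal{Z}_0$, where $\mathcal{Z}_2$ acts on the system factor as $Y$, $\mathcal{Z}_1$ gathers the contributions of $A$, $F_i$ and $F_i^\dagger W_{ij}$, and $\mathcal{Z}_0$ those of $B$, $G_i$, $G_i^\dagger W_{ij}$, $W_{ij}-\delta_{ij}$ and the shift generator. This is precisely the setting of Kurtz's singular-perturbation theorem: the relation $Y+Y^\dagger=-F_i^\dagger F_i$ (from Assumptions \ref{as 1}--\ref{as 2}) together with boundedness of $Y_1^{-1}$ makes $\mathcal{Z}_2$ dissipative with $e^{t\mathcal{Z}_2}\to P_0\otimes I$, so its stable subspace is $\ker\mathcal{Z}_2=\mathcal{H}_0\otimes\widehat{\mathcal{F}}$; the conditions $P_0AP_0=0$ and $F_iP_0=0$ of Assumption \ref{as 3} are exactly the centring condition $P_0\mathcal{Z}_1P_0=0$; and $P_0YP_1=0$ with the bounded inverse $Y_1^{-1}$ on $\mathcal{H}_1$ allows one to form the candidate limit generator $\mathcal{Z}=P_0\mathcal{Z}_0P_0-P_0\mathcal{Z}_1 Y_1^{-1}\mathcal{Z}_1P_0$.

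The remaining algebra is to check that $\mathcal{Z}$ is the generator of $\widehat\Theta_{-t}U_t$, i.e.\ that Kurtz's second-order formula reproduces the coefficients $K,L_i,S_{ij}$ of Definition \ref{def coefficients limitQSDE}: $P_0\mathcal{Z}_0P_0$ supplies $P_0BP_0$, $G_iP_0$, $W_{ij}P_0$, and the second-order term supplies the corrections $-AY_1^{-1}A$, $-F_iY_1^{-1}A$ and $F_iY_1^{-1}F_l^\dagger W_{lj}$, with the range identities of Assumption \ref{as 3} (inserting $YY_1^{-1}$ or $Y_1^{-1}Y$ alongside $P_1(\cdot)P_0$ or $P_0(\cdot)P_1$) legitimising the rearrangements and Lemma \ref{lem unitarity} — which uses Assumption \ref{as 4} — ensuring that $K,L_i,S_{ij}$ are genuine Hudson--Parthasarathy coefficients on $\mathcal{H}_0$, so that $\widehat\Theta_{-t}U_t$ really is the unitary semigroup with generator $\mathcal{Z}$. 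One then feeds the convergence of generators, with the uniform contraction bound, into the Trotter--Kato/Kurtz theorem to obtain $V_t^{(k)}\to V_t$ strongly, hence $U_t^{(k)}P_0\to U_tP_0$ strongly. I expect the principal obstacle to be verifying the hypotheses of the Trotter--Kato/Kurtz theorem for these singular, unbounded generators: because the creation and annihilation terms of the QSDE enter through the domain (boundary condition at the origin) of $\mathcal{Z}^{(k)}$ rather than as bounded perturbations, the operators $\mathcal{Z}^{(k)}$ do not a priori share a common domain, so one must instead establish resolvent convergence $(\lambda-\mathcal{Z}^{(k)})^{-1}\to(\lambda-\mathcal{Z})^{-1}$ directly — equivalently, solve $(\lambda-\mathcal{Z}^{(k)})\psi^{(k)}=\chi$ uniformly in $k$ and pass to the limit. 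The delicate point, as in any adiabatic elimination, is to control the ``fast'' excited-state components, which the $k^2Y$ term forces to be of order $k^{-2}$; it is exactly the invertibility of $Y$ on $\mathcal{H}_1$, the centring conditions, and the range identities of Assumption \ref{as 3} that make this fast part decouple and leave precisely the coefficients of Definition \ref{def coefficients limitQSDE}.
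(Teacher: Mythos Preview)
Your high-level strategy --- reduce to the vacuum via Weyl conjugation, verify that the assumptions survive this conjugation, apply Kurtz's singular-perturbation trick together with Trotter--Kato, then pass back by the cocycle property and density of step functions --- is exactly the paper's. The one substantive divergence is the choice of semigroup to which Trotter--Kato is applied, and here the paper's choice is precisely what dissolves the obstacle you flag at the end.

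You propose to work with the Hilbert-space semigroup $V_t^{(k)}=\widehat\Theta_{-t}U_t^{(k)}$ on $\mathcal{H}\otimes\widehat{\mathcal{F}}$. As you correctly note, its generator is unbounded and the $dA,dA^\dagger,d\Lambda$ contributions sit in the boundary condition at the origin rather than as bounded perturbations of the shift, so the $\mathcal{Z}^{(k)}$ do not share a common core and one is forced into a delicate resolvent argument. The paper avoids this entirely by taking the vacuum expectation first: it works with the \emph{skew} semigroup
\[
T_t^{(k)}(X)=\mathrm{id}\otimes\phi\bigl(U_t^\dagger\, X\, U_t^{(k)}\bigr),\qquad X\in\mathcal{B}(\mathcal{H}),
\]
on the \emph{system} algebra only (note the deliberate mixing of $U_t$ and $U_t^{(k)}$). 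Its generator $\mathscr{L}^{(k)}(X)=K^\dagger X+XK^{(k)}+L_i^\dagger X L_i^{(k)}$ is bounded on $\mathcal{B}(\mathcal{H})$, so Kurtz's ansatz $X^{(k)}=X+k^{-1}X_1+k^{-2}X_2$ and Trotter--Kato apply with no domain issues; the whole singular-perturbation calculus then takes place among bounded operators on $\mathcal{H}$, using only the algebraic identities of Assumption~\ref{as 3}. The skew definition is what makes the link back to strong convergence:
\[
\bigl\|(U_t^{(k)}-U_t)\,v\otimes\Phi\bigr\|^2=\bigl\langle v,\bigl(2I-T_t^{(k)}(P_0)-T_t^{(k)}(P_0)^\dagger\bigr)v\bigr\rangle,
\]
and $T_t^{(k)}(P_0)\to T_t(P_0)=P_0$ because $\mathscr{L}(P_0)=0$ by Lemma~\ref{lem unitarity}. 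In short, your outline is right but leaves the hardest step open; the paper's device of passing to a bounded $\mathcal{B}(\mathcal{H})$-valued semigroup is the missing idea that removes the unbounded-generator difficulty altogether.
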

We prove Theorem \ref{thm main result} in Section \ref{sec proof of main result}.

\section{Examples}\label{sec examples}

We use the following definitions 
in the first two examples below. 
Let $(|e\rangle,|g\rangle)$ be an 
orthogonal basis of $\mathbb{C}^2$. 
Define the raising and lowering operators in this
basis as
\begin{equation*}
    \sigma_+ = \begin{pmatrix} 0 & 1 \\ 0 & 0 \end{pmatrix},
     \qquad \sigma_- = \begin{pmatrix} 0 & 0 \\ 1 & 0 \end{pmatrix}.
\end{equation*}
Define the Pauli operators
\begin{equation*}
    \sigma_x = \sigma_+ + \sigma_-, \qquad
    \sigma_y = -i \sigma_+ + i \sigma_-, \qquad
    \sigma_z = \sigma_+\sigma_- - \sigma_-\sigma_+,
\end{equation*}
and define the projectors
\begin{equation*}
    P_e = \sigma_+ \sigma_-, \qquad P_g =\sigma_- \sigma_+.
\end{equation*}

\begin{example}
{\bf (A two-level atom driven by a laser)}  The Hilbert space for a two-level 
atom is $\mathcal{H} = \mathbb{C}^2$, with $|e\rangle$ 
the excited state, and $|g\rangle$ the ground state.  Define the detuning
$\Delta \in \mathbb{R}$, the decay rate $\gamma \ge 0$
and the complex amplitude $\alpha \in \mathbb{C}$.  The QSDE for
this system in the electric dipole and rotating wave
approximations is \cite{Coh92}
  \begin{equation*}\begin{split}
    dU^{(k)}_t = \Bigg\{&k\sqrt{\gamma}\sigma_{-}dA^\dag_t\ -k \sqrt{\gamma}\sigma_{+}dA_t -i k \alpha \sigma_{+}dt -i k \bar{\alpha} \sigma_-dt \\
  &-\frac{k^2\gamma}{2} \sigma_{+}\sigma_{-}dt 
  -i k^2\Delta \sigma_+ \sigma_-dt\Bigg\}U^{(k)}_t, \qquad  U^{(k)}_0 =
  I.
  \end{split}\end{equation*}
Define the operators $Y,A,B,F,G,W$ as
  \begin{equation*}\begin{split}
  &Y = (-i \Delta -\gamma/2) \sigma_+ \sigma_-,\qquad
  A = -i \alpha \sigma_+ -i \alpha \sigma_-,\qquad
  B = 0,\\
  &F = \sqrt{\gamma} \sigma_{-},\qquad
  G = 0,\qquad
  W = I.
  \end{split}\end{equation*}
This satisfies Assumptions \ref{as 1} and \ref{as 2}, and $P_0 = P_g$. We
take $Y_1^{-1} = - (i \Delta +\gamma/2)^{-1}\sigma_+\sigma_-$, and
Assumption \ref{as 3} holds by inspection. Definition \ref{def
coefficients limitQSDE} leads to the following coefficients
  \begin{equation*}\begin{split}
  K = - \frac{|\alpha|^2}{i \Delta + \gamma/2} P_g,\qquad
  L =-i \frac{\alpha\sqrt{\gamma}}{i \Delta +\gamma/2} P_g,\qquad
  S = \frac{i \Delta - \gamma/2}{i\Delta +\gamma/2} P_g.
\end{split}\end{equation*}
Note that $P_1 L = P_1 S = 0$ satisfying Assumption \ref{as 4}.
Theorem \ref{thm main result} then shows that $U_t^{(k)}P_0$
converges strongly to $U_tP_0$, given by
\begin{equation*}
    dU_t = \frac{P_g}{i \Delta + \gamma/2} \left\{
    - \gamma d \Lambda_t
    -i \alpha \sqrt{\gamma}dA^\dag_t
    + i \bar{\alpha} \sqrt{\gamma}  dA_t
    - |\alpha|^2 dt
    \right\}U_t, \qquad U_0 =
    I.
\end{equation*}

In the case that $\gamma = 0$ the two level atom decouples from the
field.  In this case we may explicitly calculate the ground state
evolution as
\begin{equation*}
    P_0 e^{-i \left(k \alpha \sigma_+ + k \bar{\alpha} \sigma_- + k^2 \Delta
    \sigma_+ \sigma_-\right) t} P_0
    = \frac{e^{-i k^2 \Delta t/2}}{\Omega}
    \left( \Omega \cos (k \Omega t) + i k \Delta \sin(k \Omega t)
    \right),
\end{equation*}
with $\Omega = \sqrt{\Delta^2 k^2 + 4 |\alpha|^2}$.  For $k \to
\infty$ this expression limits to $e^{i|\alpha|^2/\Delta}$ which is
the solution to our eliminated differential equation $dU_t = 
i\frac{|\alpha|^2}{\Delta}U_tdt,\ U_0 =I$.
\end{example}

\begin{example}
{\bf (Alkali atom)}
Now consider a system 
with Hilbert space $\mathcal{H} = \mathbb{C}^2
\otimes \mathbb{C}^2$. Physically, the system 
represents an alkali atom with no nuclear spin coupled to a driving
field on the $S_{1/2} \to P_{1/2}$ transition.  We have four
orthogonal states in this system corresponding to the 
atomic excited and ground states with angular momentum 
$m_z = \pm\frac{1}{2}$ along the $z$-axis.   
We define a detuning $\Delta \in\mathbb{R}$, 
a decay rate $\gamma \geq 0$ and a
magnetic field 
$B_i \in \mathbb{R},\ i \in {x,y,z}$.  
The system may emit into $n=3$ independent 
dipole modes, $A^i_t$, where the modes
are labelled by $i \in \{x,y,z\}$. The 
QSDE for this system in the dipole and 
rotating wave approximations is \cite{Coh92},
  \begin{equation*}\begin{split}
  dU^{(k)}_t = \Bigg\{&k\sqrt{\gamma}\sigma_- \otimes \sigma_i dA^{i\dag}_t -
  k \sqrt{\gamma}\sigma_+ \otimes \sigma_i dA^i_t   
  -\frac{3 k^2\gamma}{2} P_e \otimes I dt\\
  &-i \left(k^2\Delta P_e \otimes I  +I 
  \otimes B_i \sigma_i\right)dt\Bigg\}U^{(k)}_t, \qquad U^{(k)}_0 = I.
  \end{split}\end{equation*}
Defining the operators $Y,A,B,F_i,G_i,W_{ij}$ as
  \begin{equation*}\begin{split}
  &Y = \left(-i \Delta - \frac{3\gamma}{2}\right) P_e \otimes I,\qquad
  A = 0,\qquad
  B = -i I \otimes B_i \sigma_i \\
  &F_i = \sqrt{\gamma} \sigma_- \otimes \sigma_{i},\qquad
  G_i = 0,\qquad
  W_{ij} = \delta_{ij},
  \end{split}\end{equation*}
satisfies Assumptions \ref{as 1} and \ref{as 2}, and $P_0 = P_g\otimes I$.
We take $Y_1^{-1} = - (i \Delta + \frac{3\gamma}{2})^{-1}P_e\otimes I$, and
Assumption \ref{as 3} holds by inspection.  Define the eliminated
coefficients as
  \begin{equation*}\begin{split}
  K = -i P_g \otimes B_i \sigma_i,\qquad
  L_i =0,\qquad
  S_{i j} = P_g \otimes \left(\delta_{i j}I-
  \frac{\gamma}{i\Delta + \frac{3\gamma}{2}} \sigma_i \sigma_j\right).
\end{split}\end{equation*}
This satisfies Assumption \ref{as 4}. Theorem \ref{thm main result}
then shows that $U_t^{(k)}P_0$ converges strongly to $U_tP_0$, given
by
  \begin{equation*}
  dU_t = P_g \otimes \left\{-i B_i \sigma_idt
  - \frac{\gamma}{i\Delta +  \frac{3\gamma}{2}} \sigma_i \sigma_j 
  d \Lambda^{i j}_t \right\}U_t, \qquad U_0 = I.
  \end{equation*}
\end{example}

In the following two examples we make use of 
a truncated harmonic oscillator. We have truncated 
the oscillator to satisfy the boundedness condition 
of Assumption \ref{as 1} in the following two 
examples. Let $N$ be an element 
in $\mathbb{N}$ such that $N \ge 2$. The Hilbert space 
of the oscillator is $\mathbb{C}^{N}$. We choose 
an orthonormal basis $(|0\rangle,\ldots,|N-1\rangle)$ in 
$\mathbb{C}^N$. The \emph{annihilation operator} 
$b:\ \mathbb{C}^N \to \mathbb{C}^N$ is given by 
  \begin{equation*}
  b|n\rangle = \sqrt{n}|n-1\rangle,\qquad  \qquad n \in \{1,\ldots, N-1\},
  \end{equation*}
and $b |0\rangle =0$.
The \emph{creation operator} is defined to be the 
adjoint $b^\dag$.

\begin{example}\label{ex John and Ramon}
{\bf (Gough and Van Handel \cite{GoV07})}
Let $\mathfrak{h}$ be a Hilbert space. We 
define $\mathcal{H} = \mathfrak{h}\otimes\mathbb{C}^N$. 
The Hilbert space $\mathfrak{h}$ describes a system 
inside a cavity. We model the cavity as a truncated 
oscillator $\mathbb{C}^N$. Let $E_{ij},\ i,j \in \{0,1\}$ 
be bounded operators on $\mathfrak{h}$ such 
that $E_{ij}^\dag = E_{ji}$ and
$\|E_{11}\| < \frac{\gamma}{2}$. Consider 
the following QSDE
  \begin{equation}\label{eq our unitary}
  dU_t^{(k)} = \left\{\sqrt{\gamma}kbdA_t^\dag - 
  \sqrt{\gamma}kb^\dag dA_t - \frac{\gamma k^2}{2}b^\dag b dt 
  -iH^{(k)}dt\right\}U_t^{(k)}, 
  \qquad U_0^{(k)} = I.
  \end{equation}
Here $\gamma$ is a real parameter and $H^{(k)}$ is 
given by 
  \begin{equation*}
  H^{(k)} = k^2E_{11}b^\dag b + kE_{10} b^\dag + kE_{01} b + E_{00}.
  \end{equation*}
Define operators $Y,A,B,F,G,W$ as 
  \begin{equation*}\begin{split}
  &Y = \left(-i E_{11} - \frac{\gamma}{2}\right)b^\dag b, \qquad 
  A = -i\left(E_{10} b^\dag + E_{01} b\right), \qquad B = -i E_{00},\\
  &F = \sqrt{\gamma} b, \qquad  G= 0,\qquad W = I.
  \end{split}\end{equation*}  
This satisfies Assumptions \ref{as 1} and \ref{as 2} and 
$P_0 = I_{\mathfrak{h}} \otimes |0\rangle\langle0|$. Since 
$\|E_{11}\| < \frac{\gamma}{2}$, the inverse 
$\left(iE_{11} + \frac{\gamma}{2}\right)^{-1}$ exists.
Let $N_1^{-1}:\ \mathcal{H}_1 \to \mathcal{H}_1$ 
be the inverse of the restriction of $b^\dag b$ 
to $\mathcal{H}_1$. Taking 
$Y_1^{-1} =-\left(iE_{11} + \frac{\gamma}{2}\right)^{-1}N_1^{-1} P_1$ 
satisfies Assumption \ref{as 3}. Definition 
\ref{def coefficients limitQSDE} leads to the 
following coefficients
  \begin{equation}\label{eq our coefficients}\begin{split}
  &K = -i E_{00}P_0 - 
  E_{01}\frac{1}{iE_{11} + \frac{\gamma}{2}}E_{10}P_0, \\ 
  &L = \frac{-i\sqrt{\gamma}}{iE_{11} + \frac{\gamma}{2}}E_{10}P_0, \qquad 
  S = \frac{iE_{11} -\frac{\gamma}{2}}{iE_{11} + \frac{\gamma}{2}}P_0. 
  \end{split}\end{equation}
These coefficients satisfy Assumption \ref{as 4}. 
Theorem \ref{thm main result} then shows that 
$U_t^{(k)}P_0$ converges strongly to $U_tP_0$, 
where $U_t$ is given by 
  \begin{equation*}
  dU_t = \left\{(S-P_0)d\Lambda_t + LdA_t^\dag - L^\dag S dA_t 
  + K dt\right\}U_t,\qquad U_0 = I. 
  \end{equation*}
\end{example}

\begin{remark}
Note that we consider a truncated oscillator, where  
\cite{GoV07} treats the full 
oscillator, and that we prove our result 
strongly, whereas \cite{GoV07} proves a weak limit. The 
convergence of the Heisenberg dynamics follows 
immediately from our strong result. Apart 
from these points, Example \ref{ex John and Ramon}
reproduces the result in \cite{GoV07}. Care must 
be taken when directly comparing the limit equations, since 
the results in \cite{GoV07} are presented in the 
interaction picture with respect to the cavity. 
Under our assumptions, we define $V_t^{(k)}$ as the 
solution to 
  \begin{equation*}
  dV^{(k)}_t = \left\{\sqrt{\gamma}kbdA_t^\dag - \sqrt{\gamma}kb^\dag dA_t 
  -\frac{\gamma k^2}{2}b^\dag bdt \right\}V_t^{(k)},\qquad V_0^{(k)} = I.
  \end{equation*}   
The unitary in the interaction picture is 
then given by $\tilde{U}_t^{(k)} = 
V_t^{(k)\dag}U^{(k)}_t$, where $U_t^{(k)}$ is given by 
Eq.\ \eqref{eq our unitary}. Note that due to 
Theorem \ref{thm main result}, $V_t^{(k)}P_0$ 
converges strongly to $V_tP_0$, where $V_t$ is given by 
  \begin{equation*}
  dV_t = -2P_0d\Lambda_tV_t,\qquad V_0 = I.
  \end{equation*} 
This accounts for the sign difference between 
the coefficients in the equation for $\tilde{U}_t$ 
presented in \cite{GoV07}, and the coefficients in 
the equation for $U_t$ given by 
Eq.\ \eqref{eq our coefficients}.  
\end{remark}

\begin{example}
{\bf (Duan and Kimble \cite{DuK04})}
We again consider a system inside a 
cavity, described by the Hilbert space 
$\mathcal{H} = \mathfrak{h} \otimes \mathbb{C}^{N}$. 
The system inside the cavity is a three level 
atom, i.e.\ $\mathfrak{h} = \mathbb{C}^3$. Let 
$(|e\rangle, |+\rangle, |-\rangle)$ be an orthogonal 
basis in $\mathfrak{h}$. In this basis we define
\begin{equation*}
	\sigma_+^{(+)} = \begin{pmatrix}0&1&0\\0&0&0\\0&0&0 \end{pmatrix} \qquad
	\sigma_+^{(-)} = \begin{pmatrix}0&0&1\\0&0&0\\0&0&0 \end{pmatrix}.
\end{equation*}
Moreover define $\sigma^{(\pm)}_- = \sigma^{(\pm)\dag}_+$ 
and $P_\pm = \sigma^{(\pm)}_- \sigma^{(\pm)}_+$.
The QSDE for a lambda system with one leg 
($+ \leftrightarrow e$) resonantly coupled to the cavity, 
under the rotating wave approximation in the rotating frame, is,
  \begin{equation*}\begin{split}
  dU^{(k)}_t = \Bigg\{&
  \sqrt{\gamma} kb dA_t^\dag - \sqrt{\gamma} kb^\dag dA_t - 
  \frac{\gamma k^2}{2}b^\dag bdt\ + \\   
  &gk^2\Big(\sigma_+^{(+)}b - \sigma_-^{(+)}b^\dag\Big)dt +
  k\left(\sigma_+^{(-)}\alpha - \sigma_-^{(-)}\bar\alpha\right)dt 
  \Bigg\}U^{(k)}_t, \qquad U^{(k)}_0 = I.
  \end{split}\end{equation*}
Here $\gamma$ is a positive real parameter and $\alpha$ is 
a complex parameter. 
Note that we extend the model from \cite{DuK04} 
to allow driving on the uncoupled leg 
($- \leftrightarrow e$) of the transition. 
Define operators $Y,A,B,F,G,W$ as 
  \begin{equation*}\begin{split}
  &Y = -\frac{\gamma}{2}b^\dag b + 
  g\Big(\sigma_+^{(+)}b - \sigma_-^{(+)}b^\dag\Big), \qquad
  A = \left(\sigma_+^{(-)}\alpha - \sigma_-^{(-)}\bar\alpha\right), \qquad
  B = 0, \\
  &F = \sqrt{\gamma}b,\qquad G = 0, \qquad W = I. 
  \end{split}\end{equation*}
This satisfies Assumptions \ref{as 1} and \ref{as 2} and 
$P_0 = \Big(|+\rangle\langle+|\ +\ 
|-\rangle\langle-|\Big)\otimes |0\rangle\langle0|$.
We define the following 
subspaces of $\mathcal{H}$
  \begin{equation*}\begin{split}
  &H_n = \mbox{span}\Big\{|+\rangle\otimes|n\rangle,\,
  |-\rangle\otimes|n\rangle,\,|e\rangle\otimes|n-1\rangle\Big\},\qquad n \in 
  \{1,\ldots,N-1\}, \\
  &H_N = \mbox{span}\Big\{|e\rangle\otimes|N-1\rangle\Big\}. 
  \end{split}\end{equation*}
Note that $\mathcal{H}_1 = \bigoplus_{n=1}^N H_n$ and 
that the subspaces $H_n\ (n\in \{1,\ldots,N\})$ 
are all invariant under the action of $Y$. On the subspaces 
$H_n$, $n \in \{1,\ldots,N-1\}$, $Y$ is given by 
  \begin{equation*}
  Y|_{H_n} = \begin{pmatrix} -\frac{\gamma n}{2} &0 &-g\sqrt{n} \\
  0 &-\frac{\gamma n}{2} &0 \\
  g\sqrt{n} &0 &-\frac{\gamma(n-1)}{2}
  \end{pmatrix}, 
  \end{equation*}
with respect to the basis 
$(|+\rangle\otimes|n\rangle,
|-\rangle\otimes|n\rangle,|e\rangle\otimes|n-1\rangle)$.
Moreover, $Y|_{H_N} = -\frac{\gamma (N-1)}{2}$.
The inverse is readily computed to be 
  \begin{equation*}
  Y|_{H_n}^{-1} = -\frac{1}{d}\begin{pmatrix} \frac{\gamma(n-1)}{2} &0 &-g\sqrt{n} \\
  0 & \frac{2d}{\gamma n} &0 \\
  g\sqrt{n} &0 &\frac{\gamma n}{2}
  \end{pmatrix},\qquad n \in \{1,\ldots,N-1\},
  \end{equation*}
where $d = \frac{\gamma^2n(n-1)}{4} + g^2 n$. 
Moreover, $Y|_{H_N}^{-1} = -\frac{2}{\gamma (N-1)}$.
We now define $Y_1^{-1} = \oplus_{n=1}^{N} Y|_{H_n}^{-1}P_1$.
This satisfies Assumption \ref{as 3}.
Definition \ref{def coefficients limitQSDE} leads to 
the following coefficients
  \begin{equation*}
  K = -\frac{|\alpha|^2\gamma}{2g^2}P_-\otimes |0\rangle\langle 0|, \qquad
  L = - \frac{\gamma\alpha}{g}\sigma^{(+)}_-\sigma_+^{(-)}\otimes |0\rangle\langle 0|,\qquad
  S = P_0-2P_-\otimes|0\rangle\langle 0|.
  \end{equation*}
These operators satisfy Assumption \ref{as 4}. Theorem 
\ref{thm main result} then shows that $U_t^{(k)}P_0$ 
converges strongly to $U_tP_0$, where $U_t$ is given by 
  \begin{equation*}
  dU_t = \left\{(S-P_0)d\Lambda_t + LdA_t^\dag - L^\dag S dA_t 
  + K dt\right\}U_t,\qquad U_0 = I. 
  \end{equation*} 
Note that the ground state system is a two-level 
system on which $S$ 
acts as $\sigma_z$.   
\end{example}

\section{Proof of Theorem \ref{thm main result}}\label{sec proof of main result}

\begin{definition}\label{de semigroups}
Suppose Assumptions \ref{as 1}, \ref{as 2}, \ref{as 3}
and \ref{as 4} hold.
Let $\mathcal{B}(\mathcal{H})$ and 
$\mathcal{B}(\mathcal{H}_0)$ be the Banach 
spaces of all bounded operators on $\mathcal{H}$ and 
$\mathcal{H}_0$, respectively. We 
define for all $t \ge 0$ and $k \ge 0$ 
  \begin{equation*}\begin{split}
  &T_t^{(k)}(X)= \mbox{id}\otimes\phi\left(U^{\dag}_t X U^{(k)}_t\right),\
  \qquad X \in \mathcal{B}(\mathcal{H}), \\
  &T_t(X) = \mbox{id}\otimes\phi\left(U^{\dag}_t X U_t\right),\
  \qquad X \in \mathcal{B}(\mathcal{H}_0),
  \end{split}\end{equation*}
where $U_t^{(k)}$ and $U_t$ are given by 
Eqs.\ \eqref{eq HuP} and \eqref{eq HuPlimit}, 
respectively.  
\end{definition}

Note that $T_t^{(k)}$ is intentionally
skew  with respect to $U_t$ and $U_t^{(k)}$.

\begin{lemma}\label{lem norm continuous}
For each $k > 0$, the families of bounded linear maps 
$T_t^{(k)}\ (t\ge 0)$ and $T_t\ (t\ge 0)$ given 
by Definition \ref{de semigroups}
are norm continuous one-parameter 
contraction semigroups 
with generators
  \begin{equation}\label{eq generators}\begin{split}
  &\mathscr{L}^{(k)}(X) = K^\dag X + XK^{(k)} + L_i^\dag X L_i^{(k)},
  \qquad X \in \mathcal{B}(\mathcal{H}),\\
  &\mathscr{L}(X) = K^\dag X +XK  + L_i^\dag XL_i,\qquad X \in \mathcal{B}(\mathcal{H}_0),
  \end{split}\end{equation}
respectively. That is $T_t^{(k)} = \exp\left(t\mathscr{L}^{(k)}\right)$ 
and $T_t = \exp(t\mathscr{L})$ for all $t\ge 0$.   
\end{lemma}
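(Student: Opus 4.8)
The plan is to convert the defining formulas for $T_t^{(k)}$ and $T_t$ into closed integro-differential equations by means of the quantum It\^o formula, and then to read off all the claimed properties from the boundedness of $\mathscr{L}^{(k)}$ and $\mathscr{L}$.

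Fix $X\in\mathcal{B}(\mathcal{H})$ and consider $U_t^{\dag}XU_t^{(k)}$, where $U_t^{(k)}$ solves \eqref{eq HuP} and $U_t^{\dag}$ solves the adjoint of \eqref{eq HuPlimit} (both recalled in the text). Expanding $d(U_t^{\dag}XU_t^{(k)}) = (dU_t^{\dag})XU_t^{(k)} + U_t^{\dag}X(dU_t^{(k)}) + (dU_t^{\dag})X(dU_t^{(k)})$ with the It\^o table $dA^i\,dA^{j\dag} = \delta_{ij}\,dt$, $dA^i\,d\Lambda^{jk} = \delta_{ij}\,dA^k$, $d\Lambda^{ij}\,dA^{k\dag} = \delta_{jk}\,dA^{i\dag}$, $d\Lambda^{ij}\,d\Lambda^{kl} = \delta_{jk}\,d\Lambda^{il}$ (all other products vanishing) produces a $dt$-term together with terms in $dA^{i\dag}$, $dA^i$ and $d\Lambda^{ij}$. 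In the It\^o correction $(dU_t^{\dag})X(dU_t^{(k)})$ the only product contributing to the $dt$-coefficient pairs $L_i^{\dag}\,dA^i$ (from $dU_t^{\dag}$) with $L_i^{(k)}\,dA^{i\dag}$ (from $dU_t^{(k)}$); every other cross product lands back in $dA^i$, $dA^{i\dag}$ or $d\Lambda^{ij}$, so in particular the matrices $S_{ij}$ and $S^{(k)}_{ij}$ never enter the $dt$-part. Hence the $dt$-coefficient of $d(U_t^{\dag}XU_t^{(k)})$ is $U_t^{\dag}\bigl(K^{\dag}X + XK^{(k)} + L_i^{\dag}XL_i^{(k)}\bigr)U_t^{(k)} = U_t^{\dag}\,\mathscr{L}^{(k)}(X)\,U_t^{(k)}$.

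Next I would take the vacuum expectation: pairing the It\^o identity with $\psi\ten\Phi$ and $\chi\ten\Phi$ and using that the vacuum--vacuum matrix element of a quantum stochastic integral retains only its time integrand (the increments $dA^{i\dag}$, $dA^i$, $d\Lambda^{ij}$ are annihilated by $\Phi$ on one side), one obtains $T_t^{(k)}(X) = X + \int_0^t T_s^{(k)}\bigl(\mathscr{L}^{(k)}(X)\bigr)\,ds$, the scalar integrands $s\mapsto\langle\psi,T_s^{(k)}(\mathscr{L}^{(k)}(X))\chi\rangle$ being continuous because $s\mapsto U_s^{(k)}$ is strongly continuous. Now $K^{(k)},L_i^{(k)}$ are bounded by Assumption \ref{as 2} and $K,L_i$ are bounded by Definition \ref{def coefficients limitQSDE}, so $\mathscr{L}^{(k)}$ is a bounded linear operator on the Banach space $\mathcal{B}(\mathcal{H})$; the displayed Volterra equation with bounded kernel then forces, by the usual iteration and Gr\"onwall estimate, the unique solution $T_t^{(k)} = \exp(t\mathscr{L}^{(k)}) = \sum_{m\ge0}\tfrac{t^m}{m!}(\mathscr{L}^{(k)})^m$, which is automatically a norm-continuous one-parameter semigroup with generator $\mathscr{L}^{(k)}$. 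Contractivity is immediate: $\mbox{id}\ten\phi$ has norm one (composition with a state), $U_t^{(k)}$ is unitary and $\|U_t\|=1$ ($U_t$ is unitary on $\mathcal{H}_0\ten\mathcal{F}$ and equals the identity on $\mathcal{H}_1\ten\mathcal{F}$, since $U_tP_1=P_1$), whence $\|T_t^{(k)}(X)\|\le\|U_t^{\dag}XU_t^{(k)}\|\le\|X\|$.

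The assertion for $T_t$ follows from the same computation carried out with $X\in\mathcal{B}(\mathcal{H}_0)$ (viewed as $P_0XP_0$ on $\mathcal{H}$), with \eqref{eq HuPlimit} and its adjoint in place of \eqref{eq HuP}, and $K,L_i$ in place of $K^{(k)},L_i^{(k)}$; because $U_t$ preserves $\mathcal{H}_0\ten\mathcal{F}$ and satisfies $U_tP_1=P_1$, the operator $T_t(X)$ again lies in $\mathcal{B}(\mathcal{H}_0)$. I expect the single delicate point to be the bookkeeping in the quantum It\^o expansion --- in particular confirming that the gauge increments $d\Lambda^{ij}$ and the coefficients $S_{ij},S^{(k)}_{ij}$ make no contribution to the $dt$-coefficient --- after which the statement is a routine application of the theory of uniformly continuous semigroups with bounded generators.
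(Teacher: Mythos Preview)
Your proof is correct and follows essentially the same route as the paper: contractivity via $\|\mbox{id}\otimes\phi\|\le 1$ and unitarity, then the quantum It\^o rule together with the vanishing of vacuum expectations of stochastic integrals to obtain $dT_t^{(k)}(X)=T_t^{(k)}(\mathscr{L}^{(k)}(X))\,dt$, and finally boundedness of $\mathscr{L}^{(k)}$ to conclude norm continuity. Your version is more explicit about the It\^o table bookkeeping (in particular that the $S_{ij},S^{(k)}_{ij}$ coefficients never reach the $dt$-part) and about passing from the integral equation to the exponential, but these are elaborations rather than a different argument.
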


\begin{proof}
We only prove the lemma for $T^{(k)}_t$. The proof
for $T_t$ can be obtained in an analogous way. 
Since the conditional expectation 
$\mbox{id}\otimes\phi$ is norm contractive 
and $U_t$ and $U_t^{(k)}$ are unitary, we 
have
  \begin{equation*}\begin{split}
  &\left\| T_t^{(k)}(X) \right\| \leq \left\|U^\dag_t X U_t^{(k)}\right\| 
  \leq \left\|U^\dag_t\right\|\left\|U^{(k)}_t\right\|\left\|X\right\| = 
  \left\|X\right\|,\\
  \end{split}\end{equation*}
for all $X \in \mathcal{B}(\mathcal{H})$. This proves that 
$T_t^{(k)}$ is a contraction for all $t \ge 0$. 
An application of the quantum It\^o rule \cite{HuP84}, 
together with the fact that vacuum expectations 
of stochastic integrals vanish, shows that 
  \begin{equation*}\begin{split}
  &dT^{(k)}_t(X) = \mbox{id}\otimes \phi \left(d\left(U_t^\dag XU_t^{(k)}\right)\right) = \\
  &\mbox{id}\otimes\phi\left(U_t^\dag\left(K^\dag X + XK^{(k)} + 
  L_i^\dag X L_i^{(k)}\right)U_t^{(k)}\right)dt = T^{(k)}_t\left(\mathscr{L}^{(k)}(X)\right)dt, 
  \end{split}\end{equation*}
for all $X \in \mathcal{B}(\mathcal{H})$. That is,
$T^{(k)}_t = \exp\left(t\mathscr{L}^{(k)}\right)$ is 
a one-parameter semigroup with generator $\mathscr{L}^{(k)}$. Furthermore,
$\mathscr{L}^{(k)}$ is bounded 
  \begin{equation*}
  \left\|\mathscr{L}^{(k)}(X)\right\| \le \left( \left\|K^\dag\right\| + 
  \left\|K^{(k)}\right\| + \left\|L_i^\dag\right\|
  \left\|L_i^{(k)}\right\| \right)\|X\|,
  \end{equation*}
which proves that $T_t^{(k)}$ is norm continuous.      
\end{proof}

The proof of Theorem \ref{thm main result} 
relies heavily on the 
Trotter-Kato theorem \cite{Tro58,Kat59} in combination with 
an argument due to Kurtz \cite{Kur73}. We have taken the formulation 
of the Trotter-Kato theorem from \cite[Thm 3.17, page 80]{Dav80}, 
see also \cite[Chapter 1, Section 6]{KuE86}. The 
formulation is more general than needed 
for the proof of Theorem \ref{thm main result}.

\begin{theorem}{\bf Trotter-Kato Theorem}\label{thm Trotter-Kato}
Let $\mathcal{B}$ be a Banach space and let 
$\mathcal{B}_0$ be a closed subspace of $\mathcal{B}$. 
For each $k \ge 0$, let $T_t^{(k)}$ be a strongly 
continuous one-parameter contraction semigroup on 
$\mathcal{B}$ with generator 
$\mathscr{L}^{(k)}$. Moreover, let $T_t$ be a 
strongly continuous one-parameter contraction 
semigroup on $\mathcal{B}_0$ with generator $\mathscr{L}$. 
Let $\mathcal{D}$ be a core for $\mathscr{L}$.
The following conditions are equivalent:
  \begin{enumerate}
  \item For all $X \in \mathcal{D}$ there exist 
  $X^{(k)} \in \mbox{Dom}\left(\mathscr{L}^{(k)}\right)$ such 
  that 
    \begin{equation*}
    \lim_{k \to \infty} X^{(k)} = X, \qquad 
    \lim_{k \to \infty} \mathscr{L}^{(k)}\left(X^{(k)}\right) = 
    \mathscr{L}(X).
    \end{equation*}
  \item For all $0 \le s < \infty$ and all $X \in \mathcal{B}_0$ 
    \begin{equation*}
    \lim_{k \to \infty} \left\{\sup_{0 \le t \le s} 
    \left\|T_t^{(k)}(X) - T_t(X)\right\|\right\} = 0.
    \end{equation*}   
  \end{enumerate}
\end{theorem}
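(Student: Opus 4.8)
The statement is the classical Trotter--Kato theorem, and in the paper it is simply quoted from \cite{Dav80}; the proof I would give routes everything through the resolvent operators, which are the Laplace transforms of the semigroups and are uniformly bounded by Hille--Yosida. Write $R_\lambda^{(k)} = (\lambda - \mathscr{L}^{(k)})^{-1}$ and $R_\lambda = (\lambda - \mathscr{L})^{-1}$; since all the semigroups in sight are contractions, every $\lambda > 0$ lies in the resolvent set, with $\|R_\lambda^{(k)}\| \le 1/\lambda$ and $\|R_\lambda\| \le 1/\lambda$.

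For the implication (1) $\Rightarrow$ (2) the first step is \emph{strong resolvent convergence}: for every $\lambda > 0$ and every $X \in \mathcal{B}_0$, $R_\lambda^{(k)} X \to R_\lambda X$ in $\mathcal{B}$. I would check this first on the dense subset $(\lambda - \mathscr{L})\mathcal{D}$ of $\mathcal{B}_0$ --- dense because $\mathcal{D}$ is a core for $\mathscr{L}$ and $\lambda - \mathscr{L}$ maps $\mathrm{Dom}(\mathscr{L})$ bijectively onto $\mathcal{B}_0$ with bounded inverse. Given $Y \in \mathcal{D}$, take the approximants $Y^{(k)}$ supplied by hypothesis (1); then $(\lambda - \mathscr{L}^{(k)})Y^{(k)} \to (\lambda - \mathscr{L})Y$, and applying $R_\lambda^{(k)}$ together with the bound $\|R_\lambda^{(k)}\| \le 1/\lambda$ forces $R_\lambda^{(k)}((\lambda - \mathscr{L})Y) \to Y = R_\lambda((\lambda - \mathscr{L})Y)$. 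A $3\varepsilon$-argument, using the uniform bound a second time, extends the convergence to all $X \in \mathcal{B}_0$.

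The second and main step is to upgrade strong resolvent convergence to convergence of the semigroups, uniformly on compact time intervals. Here I would use the exponential (Post--Widder) formula $T_t X = \lim_n (n/t)^n R_{n/t}^n X$, which for a contraction semigroup converges with an error that is uniform for $t$ in a bounded interval once $X$ is replaced by a resolvent-mollified vector $\mu R_\mu X$ (which lies in $\mathrm{Dom}(\mathscr{L})$, or in $\mathrm{Dom}(\mathscr{L}^2)$ after a second smoothing, and is close to $X$ for $\mu$ large). Given $s$ and $\varepsilon$: mollify $X$; truncate the exponential formula at a fixed large $n$ for $T_t$; note that the finite product of resolvents $(R^{(k)}_{n/t})^n$ converges to $R_{n/t}^n$ by the first step; and close the loop with the exponential formula for $T_t^{(k)}$, whose truncation error is controlled \emph{uniformly in $k$} because after the mollification only the bound $\|R^{(k)}_\lambda\| \le 1/\lambda$ enters. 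Keeping every estimate uniform in $k$ while the domains $\mathrm{Dom}(\mathscr{L}^{(k)})$ shift with $k$ is the delicate point, and the resolvent mollification is precisely the device that handles it. The converse (2) $\Rightarrow$ (1) is the easy direction: recover resolvent convergence from semigroup convergence by Laplace transform and dominated convergence (legitimate since $\|T^{(k)}_t\| \le 1$), and then, for $X \in \mathcal{D}$, set $X^{(k)} = R_1^{(k)}(X - \mathscr{L}X) \in \mathrm{Dom}(\mathscr{L}^{(k)})$, so that $X^{(k)} \to R_1(X - \mathscr{L}X) = X$ and $\mathscr{L}^{(k)} X^{(k)} = X^{(k)} - (X - \mathscr{L}X) \to \mathscr{L}X$.

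I should add that for the use made of this theorem in the present paper the general machinery is more than is needed: by Lemma \ref{lem norm continuous} the generators $\mathscr{L}^{(k)}$ and $\mathscr{L}$ are bounded and the semigroups norm continuous, so $\mathrm{Dom}(\mathscr{L}^{(k)}) = \mathcal{B}(\mathcal{H})$ and one may take $\mathcal{D} = \mathcal{B}(\mathcal{H}_0)$. Thus, when the theorem is invoked, the only thing that genuinely has to be produced is the family $X^{(k)} \in \mathcal{B}(\mathcal{H})$ of condition (1), with $X^{(k)} \to X$ and $\mathscr{L}^{(k)}(X^{(k)}) \to \mathscr{L}(X)$ for $X \in \mathcal{B}(\mathcal{H}_0)$ --- which is exactly where Assumptions \ref{as 3} and \ref{as 4} will be spent. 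Once condition (2) has been obtained, the Weyl-operator sandwiching and density argument announced in the introduction lifts the convergence of $T_t^{(k)}$ on the vacuum to strong convergence of the unitaries $U_t^{(k)} P_0$ on all of $\mathcal{H}_0 \otimes \mathcal{F}$.
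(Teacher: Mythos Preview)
The paper does not prove this theorem at all: it is stated without proof and attributed to \cite[Thm~3.17, p.~80]{Dav80} and \cite[Chapter~1, Section~6]{KuE86}, exactly as you note in your first sentence. There is therefore no ``paper's own proof'' to compare against; the theorem functions purely as a black box in Section~\ref{sec proof of main result}.

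Your sketch of the standard resolvent proof is correct in outline: condition~(1) yields strong resolvent convergence on $\mathcal{B}_0$ by the core argument and the uniform bound $\|R_\lambda^{(k)}\|\le 1/\lambda$, and this is then upgraded to uniform-on-compacts semigroup convergence via an approximation formula (Post--Widder, or equivalently the Yosida approximation), with the converse recovered by Laplace transform. Your final two paragraphs accurately describe how the theorem is actually deployed in the paper: since Lemma~\ref{lem norm continuous} gives bounded generators, the domains are trivial and the only work is constructing the $X^{(k)}$ of condition~(1), which is precisely what Proposition~\ref{prop convergence} does via Kurtz's ansatz $X^{(k)}=X+\tfrac{1}{k}X_1+\tfrac{1}{k^2}X_2$.
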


\begin{proposition}\label{prop convergence}
Let $T_t^{(k)}$ and $T_t$ be the one-parameter 
semigroups on $\mathcal{B}(\mathcal{H})$ and 
$\mathcal{B}(\mathcal{H}_0)$ defined in 
Definition \ref{de semigroups}, respectively. 
We have
  \begin{equation*}
      \lim_{k \to \infty} \left\{\sup_{0 \le t \le s} 
    \left\|T_t^{(k)}(X) - T_t(X)\right\|\right\} = 0,
  \end{equation*}
for all $X \in \mathcal{B}(\mathcal{H}_0)$ and 
$0 \le s < \infty$.  
\end{proposition}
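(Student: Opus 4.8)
The plan is to invoke the Trotter--Kato theorem (Theorem \ref{thm Trotter-Kato}) with $\mathcal{B} = \mathcal{B}(\mathcal{H})$, $\mathcal{B}_0 = \mathcal{B}(\mathcal{H}_0)$, and the semigroups $T_t^{(k)}$, $T_t$ from Definition \ref{de semigroups}, which by Lemma \ref{lem norm continuous} are norm-continuous contraction semigroups with the bounded generators $\mathscr{L}^{(k)}$ and $\mathscr{L}$ of Eq.\ \eqref{eq generators}. Since $\mathscr{L}$ is bounded, its domain is all of $\mathcal{B}(\mathcal{H}_0)$ and any dense subset --- in particular $\mathcal{B}(\mathcal{H}_0)$ itself --- is a core; likewise $\mathrm{Dom}(\mathscr{L}^{(k)}) = \mathcal{B}(\mathcal{H})$. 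Hence condition (2) of the theorem, which is exactly the conclusion of this proposition, will follow once I verify condition (1): for every $X \in \mathcal{B}(\mathcal{H}_0)$ I must produce operators $X^{(k)} \in \mathcal{B}(\mathcal{H})$ with $X^{(k)} \to X$ in norm and $\mathscr{L}^{(k)}(X^{(k)}) \to \mathscr{L}(X)$ in norm as $k \to \infty$.

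The construction of the approximants $X^{(k)}$ is where Kurtz's idea enters: rather than taking $X^{(k)} = X$ (which fails because $\mathscr{L}^{(k)}(X)$ contains terms of order $k^2$ and $k$ coming from $K^{(k)} = k^2 Y + kA + B$ and $L_i^{(k)} = kF_i + G_i$), one adds correction terms of order $1/k$ and $1/k^2$ designed to cancel the divergent pieces. Concretely I expect to set something like
  \begin{equation*}
  X^{(k)} = X + \frac{1}{k}\,X_1 + \frac{1}{k^2}\,X_2,
  \end{equation*}
where $X_1, X_2 \in \mathcal{B}(\mathcal{H})$ are built from $X$ and the operators $Y_1^{-1}, A, B, F_i, G_i, W_{ij}$, chosen so that when $\mathscr{L}^{(k)}(X^{(k)})$ is expanded in powers of $k$, the $k^2$-term and the $k$-term vanish identically and the $k^0$-term equals $\mathscr{L}(X)$ (with the $k^{-1}$ and $k^{-2}$ remainders going to zero in norm). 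Determining $X_1$ and $X_2$ amounts to solving a small hierarchy of linear equations; the solvability and the collapse of the would-be divergences is precisely what Assumptions \ref{as 3} and \ref{as 4} are engineered to guarantee --- the identities $YY_1^{-1}P_1 Z P_0 = P_1 Z P_0$ and the vanishing products like $F_i P_0 = 0$, $P_0 A P_0 = 0$, $P_1 L_i = P_1 S_{ij} = 0$ are exactly the cancellations one needs at each order.

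The main obstacle will be the bookkeeping in the order-by-order expansion: one has to carefully track how $P_0$ and $P_1$ interleave with $Y$, $Y_1^{-1}$, and the other coefficients, and check that every term of order $k^2$ and $k$ is killed and that the order-$1$ survivor reassembles into $K^\dag X + XK + L_i^\dag X L_i$ with $K, L_i, S_{ij}$ as in Definition \ref{def coefficients limitQSDE}. This is essentially a long but mechanical computation relying on Assumption \ref{as 3}; the structural point --- that the divergent terms live in the ranges where the assumed identities apply --- is the conceptual heart. Once condition (1) is established for arbitrary $X \in \mathcal{B}(\mathcal{H}_0)$, Theorem \ref{thm Trotter-Kato} immediately yields the uniform-on-compacts convergence claimed, completing the proof of the proposition.
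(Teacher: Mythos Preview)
Your proposal is correct and follows essentially the same route as the paper: invoke Trotter--Kato via Lemma~\ref{lem norm continuous}, then use Kurtz's ansatz $X^{(k)} = X + k^{-1}X_1 + k^{-2}X_2$ with $X_1,X_2$ chosen (via $Y_1^{-1}$) to kill the order-$k$ and order-$k^2$ terms of $\mathscr{L}^{(k)}(X^{(k)})$, leaving $\mathscr{L}(X)$ in the limit. One small remark: the cancellations here rely only on Assumption~\ref{as 3} (in particular $P_0Y=0$ and the $Y_1^{-1}$ identities); Assumption~\ref{as 4} is not actually needed in this proposition --- it enters later, in the unitarity lemma and the displaced-coefficient argument.
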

\begin{proof} The proof follows the line of  
the proof of \cite[Theorem 2.2]{Kur73}. Lemma 
\ref{lem norm continuous} shows that $T_t^{(k)} = \exp\left(t\mathscr{L}^{(k)}\right)$ 
and $T_t = \exp(t\mathscr{L})$ are norm continuous, and therefore 
also strongly continuous semigroups with generators 
given by Eq.\ \eqref{eq generators}. This means we satisfy the 
assumptions of the Trotter-Kato Theorem (Thm.\ \ref{thm Trotter-Kato}) 
with $\mathcal{D} = \mathcal{B}(\mathcal{H}_0)$ and 
$\mbox{Dom}\left(\mathscr{L}^{(k)}\right) = \mathcal{B}(\mathcal{H})$.

We can write $\mathscr{L}^{(k)}(X) = \mathscr{L}_0(X) + 
 k\mathscr{L}_1(X) + k^2\mathscr{L}_2(X),\  X \in \mathcal{B}(\mathcal{H})$,
where (recall Assumption \ref{as 2})
  \begin{equation*}\begin{split}
  &\mathscr{L}_0(X) = K^\dag X +XB + L_i^\dag XG_i, \ \ \ \
  \mathscr{L}_1(X) = XA + L_i^\dag XF_i, \ \ \ \
  \mathscr{L}_2(X) = XY.
  \end{split}\end{equation*}    
Let $X$ be an element in $\mathcal{B}(\mathcal{H}_0)$ and 
let $X_1$ and $X_2$ be elements in $\mathcal{B}(\mathcal{H})$.
We define $X^{(k)} = X + \frac{1}{k}X_1 + \frac{1}{k^2}X_2$.
Collecting terms with equal powers in $k$, we find
  \begin{equation*}\begin{split}
  \mathscr{L}^{(k)}\left(X^{(k)}\right) ={ } 
  &\left(\mathscr{L}_0(X) + \mathscr{L}_1(X_1) + \mathscr{L}_2(X_2)\right)\ + \\ 
  &k\left(\mathscr{L}_1(X) + \mathscr{L}_2(X_1)\right)\ + \\
  &k^2\left(\mathscr{L}_2(X)\right)\ + \\
  &\frac{1}{k}\left(\mathscr{L}_0(X_1) + \mathscr{L}_1(X_2)\right) +
  \frac{1}{k^2}\left(\mathscr{L}_0(X_2)\right).  
  \end{split}\end{equation*}
Note that $\mathscr{L}_2(X) = 0$ as $X \in \mathcal{B}(\mathcal{H}_0)$ 
and $P_0Y = 0$. Using the existence of $Y_1^{-1}$, 
we set
  \begin{equation*}\begin{split}
  &X_1 =  - \mathscr{L}_1(X)Y_1^{-1}P_1,\\ 
  &X_2 = - \left(\mathscr{L}_0(X) +\mathscr{L}_1(X_1)\right)Y_1^{-1}P_1.
  \end{split}\end{equation*}
Using the properties of $Y_1^{-1}$ in Assumption \ref{as 3}, we obtain
  \begin{equation*}\begin{split}
  \lim_{k\to \infty} \mathscr{L}^{(k)}\left(X^{(k)}\right){ } &= 
  \lim_{k\to \infty} \left(\mathscr{L}(X) +
  \frac{1}{k}\left(\mathscr{L}_0(X_1) + \mathscr{L}_1(X_2)\right) +
  \frac{1}{k^2}\mathscr{L}_0(X_2)\right) \\
  &= \mathscr{L}(X).
  \end{split}\end{equation*}
The proposition then follows from   
the Trotter-Kato Theorem.    
\end{proof}

Note that for all $v \in \mathcal{H}_0$, we can write 
$U_t v\otimes\Phi = P_0 U_tv\otimes\Phi$. This leads to
  \begin{equation*}\begin{split}
  \left\|(U^{(k)}_t -U_t)v\otimes\Phi\right\|^2{ } &=
  \left\|(U^{(k)}_t -P_0U_t)v\otimes\Phi\right\|^2 \\
  &= 
  \left\langle v,\left(2I - T_t^{(k)}(P_0) - T_t^{(k)}(P_0)^\dag\right)v\right\rangle.  
  \end{split}\end{equation*}
Here we have used that $\mbox{id}\otimes\phi$ is a positive 
map, i.e.\ it commutes with the adjoint. Using 
Proposition \ref{prop convergence} and noting that 
$\mathscr{L}(P_0)= 0$ by Lemmas \ref{lem norm continuous} 
and \ref{lem unitarity}, we see that 
Theorem \ref{thm main result} holds for 
all vectors in $\mathcal{H}_0 \otimes \mathcal{F}$ 
of the form $\psi = v\otimes \Phi$. We now need to 
extend this to all 
$\psi \in \mathcal{H}_0\otimes \mathcal{F}$. 

Let $f$ be an element in $L^2(\mathbb{R}^+;\mathbb{C}^n)$.
Denote by $f_t$ the function $f$ truncated at time $t$, i.e.\
$f_t(s) = f(s)$ if $s \le t$ and $f_t(s) = 0$ otherwise. 
Define the \emph{Weyl operator} $W(f_t)$ as the
unique solution to the following QSDE
  \begin{equation}\label{eq Weyl}
  dW(f_t) = \left\{f(t)_idA^{i\dag}_t - \overline{f(t)}_idA^i_t - 
  \frac{1}{2}\overline{f(t)}_i{f(t)_i}dt\right\}W(f_t), \qquad W(f_0) = I.
  \end{equation}
Note that $W(f_t)$ is a unitary operator from $\mathcal{F}$ 
to $\mathcal{F}$. Moreover, it is not hard to see 
that $\pi(f_t) = W(f_t)\Phi$, see e.g.\ \cite{Par92}.
Often we will identify a constant $\alpha\in\mathbb{C}^n$ 
with the constant function on $\mathbb{R}^+$ taking the 
value $\alpha$ (truncated at some large $T\ge 0$ so that 
it is an element of $L^2(\mathbb{R}^+;\mathbb{C}^n)$).

\begin{definition}\label{def Utalpha} Let $f$ be an element in 
$L^2(\mathbb{R}^+;\mathbb{C}^n)$. Suppose that 
Assumptions \ref{as 1}, \ref{as 2}, \ref{as 3} and 
\ref{as 4} hold and let $U^{(k)}_t$ and $U_t$ 
be given by Eqs.\ \eqref{eq HuP} and \eqref{eq HuPlimit}, 
respectively. Define
  \begin{equation*}\begin{split}
  &U_t^{(kf)} =  W(f_t)^\dag U_t^{(k)}W(f_t), \qquad 
  U_t^{(f)} = W(f_t)^\dag U_tW(f_t), \\
  &T^{(kf)}_t(X) = \mbox{id}\otimes\phi\left(U_t^{(f)\dag}XU_t^{(kf)}\right),
  \qquad X \in \mathcal{B}(\mathcal{H}),\\
  &T^{(kf)}_t(X) = \mbox{id}\otimes\phi\left(U_t^{(f)\dag}XU_t^{(f)}\right),
  \qquad X \in \mathcal{B}(\mathcal{H}_0).
  \end{split}\end{equation*}
\end{definition}

\begin{definition}\label{def coefficientsQSDEdisplaced}
Let $\alpha$ be an element in $\mathbb{C}^n$ 
and let $i$ be an element in $\{1,\ldots,n\}$.
Let $K^{(k)},K,L^{(k)}_i,L_i,S^{(k)}_{ij}$ and $S_{ij}$ 
be the coefficients of Eqs.\ \eqref{eq HuP} and \eqref{eq HuPlimit}.
Define operators $K^{(k\alpha)}, K^{(\alpha)}, L^{(k\alpha)}_i$ 
and $L^{(\alpha)}_i$ by 
  \begin{equation*}\begin{split}
  &K^{(\alpha)} = K + \bar{\alpha}_i (S_{ij}- P_0\delta_{i j}) \alpha_j 
  + \bar{\alpha}_i L_i - \alpha_j L^\dag_i S_{i j}, \qquad
  L_i^{(\alpha)} = L_i + \alpha_j S_{i j},\\
  &K^{(k\alpha)} = K^{(k)} + \bar{\alpha}_i (S^{(k)}_{ij}- \delta_{i j}) \alpha_j 
  + \bar{\alpha}_i L^{(k)}_i - \alpha_j L^{(k)\dag}_i S_{i j}, \ \ \
  L_i^{(k\alpha)} = L^{(k)}_i + \alpha_j S^{(k)}_{i j}.
  \end{split}\end{equation*}
\end{definition}
Note that with the coefficients given by 
Definition \ref{def coefficientsQSDEdisplaced}, applying  
the quantum It\^o rule to $U^{(k\alpha)}_t$ 
and $U^{(\alpha)}_t$, defined in Definition \ref{def Utalpha}, 
gives
  \begin{equation}\begin{split}\label{eq Hupalpha}
  &dU^{(\alpha)}_t = \Bigg\{\Big(S_{i j}-\delta_{ij}P_0\Big) d \Lambda^{ij}_t + 
  L_i^{(\alpha)}dA^{i\dagger}_{t} - L_i^{(\alpha) \dagger} S_{i j}dA^j_t\ 
  + K^{(\alpha)}dt\Bigg\}U^{(\alpha)}_t, \\
  &dU^{(k\alpha)}_t = \Bigg\{\Big(S^{(k)}_{i j}-\delta_{ij}\Big) d \Lambda^{ij}_t + 
  L_i^{(k\alpha)}dA^{i\dagger}_{t} - L_i^{(k\alpha) \dagger} S^{(k)}_{i j}dA^j_t\ 
  + K^{(k\alpha)}dt\Bigg\}U^{(k\alpha)}_t, 
  \end{split}\end{equation}
with $U_0^{(\alpha)} = U_0^{(k\alpha)} = I$.

\begin{definition}\label{def coefficients displaced}
Suppose that Assumptions \ref{as 1}, \ref{as 2}, \ref{as 3}
and \ref{as 4} hold. Let $\alpha$ be an element in 
$\mathbb{C}^n$ and let $i$ be an element in $\{1,\ldots,n\}$.
Define operators $A^{(\alpha)},B^{(\alpha)}$ 
and $G_i^{(\alpha)}$ by
  \begin{equation*}\begin{split}
  &A^{(\alpha)} = A + F_i \bar{\alpha}_i - \alpha_j F_i^\dag W_{i j},\\
  &B^{(\alpha)} =  B + \bar{\alpha}_i (W_{ij}- \delta_{i j}) \alpha_j + G_i \bar{\alpha}_i
	 - \alpha_j G^\dag_i W_{ij}, \\
  &G_i^{(\alpha)} =  G_i + \alpha_j W_{i j}.
  \end{split}\end{equation*}  
\end{definition}

\begin{lemma}\label{lem coefficients displaced}
Suppose Assumptions \ref{as 1}, \ref{as 2}, \ref{as 3}
and \ref{as 4} hold. Let $A,B,Y,F_i,G_i,W_{i j},K,L_i$ 
and $S_{i j}$ for $i,j \in \{1,\ldots,n\}$ be the 
various operators occuring in Assumption \ref{as 1}, 
\ref{as 2}, \ref{as 3} and \ref{as 4}. Let $K^{(\alpha)}$ 
and $L_i^{(\alpha)}$ for $i \in \{1,\ldots,n\}$ be 
given by Definition \ref{def coefficientsQSDEdisplaced}
and let $A^{(\alpha)}, B^{(\alpha)}$ and $G_i^{(\alpha)}$ 
for $i \in \{1,\ldots,n\}$ be given by Definition 
\ref{def coefficients displaced}. Then
\begin{subequations}
  \begin{align}
	&L_i^{(\alpha)} = (G_i^{(\alpha)} - 
	F_i Y^{-1}_1  A^{(\alpha)}) P_0,\label{eq lemmaequation1}\\
	&K^{(\alpha)} = P_0 \left(B^{(\alpha)} - 
	A^{(\alpha)}  Y_1^{-1}  A^{(\alpha)} \right)P_0,\label{eq lemmaequation2}
  \end{align}
  \end{subequations}
i.e. Definition \ref{def coefficients limitQSDE} holds with 
$A = A^{(\alpha)}$, $B = B^{(\alpha)}$, $G_i = G^{(\alpha)}_i$ 
$L_i = L_i^{(\alpha)}$ and $K = K^{(\alpha)}$.
Moreover, Assumptions \ref{as 1}, \ref{as 2}, \ref{as 3}
and \ref{as 4} hold for the altered coefficients 
with $P_0$ and $Y_1^{-1}$ unchanged. 
\end{lemma}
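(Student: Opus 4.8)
The plan is to reduce the lemma to a chain of elementary algebraic verifications, exploiting throughout the fact that the displacement maps of Definitions \ref{def coefficientsQSDEdisplaced} and \ref{def coefficients displaced} are affine in the operators they act on. I would start by checking compatibility of the two displacements at the level of Assumption \ref{as 2}: substituting $K^{(k)} = k^2Y + kA + B$, $L^{(k)}_i = kF_i + G_i$, $S^{(k)}_{ij} = W_{ij}$ into the formulas defining $K^{(k\alpha)}$ and $L^{(k\alpha)}_i$ and collecting powers of $k$ yields $K^{(k\alpha)} = k^2Y + kA^{(\alpha)} + B^{(\alpha)}$, $L^{(k\alpha)}_i = kF_i + G^{(\alpha)}_i$, $S^{(k)}_{ij} = W_{ij}$, with $A^{(\alpha)}, B^{(\alpha)}, G^{(\alpha)}_i$ exactly those of Definition \ref{def coefficients displaced}. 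This is Assumption \ref{as 2} for the altered coefficients, and it pins down the altered building blocks.

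Next I would prove Eq.\ \eqref{eq lemmaequation1} and Eq.\ \eqref{eq lemmaequation2} by substitution. For \eqref{eq lemmaequation1}, expanding $(G^{(\alpha)}_i - F_i Y^{-1}_1 A^{(\alpha)})P_0$ produces a term $F_i Y^{-1}_1 F_l \bar\alpha_l P_0$ that vanishes because $F_l P_0 = 0$ (Assumption \ref{as 3}); regrouping the rest gives $L_i + \alpha_j(\delta_{il} + F_i Y^{-1}_1 F^\dag_l)W_{lj}P_0 = L_i + \alpha_j S_{ij} = L^{(\alpha)}_i$. Equation \eqref{eq lemmaequation2} is the longer computation: one inserts the definitions of $K$, $L_i$, $L^\dag_i$, $S_{ij}$, $A^{(\alpha)}$, $B^{(\alpha)}$ and reorganizes exactly as in the proof of Lemma \ref{lem unitarity}, using the identities $-F^\dag_iF_i = Y + Y^\dag$, $-P_1F^\dag_iG_iP_0 = P_1(A+A^\dag)P_0$, $-P_0G^\dag_iG_iP_0 = P_0(B+B^\dag)P_0$ from Eq.\ \eqref{eq split}, the $Y^{-1}_1$-relations of Assumption \ref{as 3}, and $F_iP_0 = 0$, $P_0F^\dag_i = 0$, $YP_0 = P_0Y = 0$. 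I expect this step to be the main obstacle, for bookkeeping reasons only: $L^\dag_i$ contributes $Y^{-1\dag}_1$, which must be traded for $Y^{-1}_1$ together with the $\bar\alpha$-dependent terms of $B^{(\alpha)}$ via $Y + Y^\dag = -F^\dag_iF_i$, so that the bilinear-in-$\alpha$ terms on the two sides match; conceptually it is no harder than Lemma \ref{lem unitarity}.

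It remains to check that Assumptions \ref{as 1}, \ref{as 3} and \ref{as 4} survive the displacement. For Assumption \ref{as 1} the cleanest route is the Weyl picture: $U^{(k\alpha)}_t = W(f_t)^\dag U^{(k)}_t W(f_t)$ is a conjugate of a unitary by a unitary, hence unitary, and it solves Eq.\ \eqref{eq Hupalpha}, so expanding $d(U^{(k\alpha)\dag}_tU^{(k\alpha)}_t)$ and $d(U^{(k\alpha)}_tU^{(k\alpha)\dag}_t)$ with the quantum It\^o rule and equating to zero yields precisely the unitarity relations for $K^{(k\alpha)}, L^{(k\alpha)}_i, S^{(k)}_{ij}$; boundedness is clear. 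Assumption \ref{as 4} is immediate: $P_1S_{ij} = 0$ is unchanged and $P_1L^{(\alpha)}_i = P_1(L_i + \alpha_jS_{ij}) = 0$. For Assumption \ref{as 3}: $Y$, $P_0 = \mathrm{Ker}(Y)$ and $Y^{-1}_1$ are unchanged, so $P_1Y^{-1}_1 = Y^{-1}_1P_1$ persists, the products $P_0YP_1$, $F_iP_0$, $P_0(\delta_{il}+F_iY^{-1}_1F^\dag_l)W_{lj}P_1$ are unchanged, and $P_0A^{(\alpha)}P_0 = P_0AP_0 + \bar\alpha_iP_0F_iP_0 - \alpha_jP_0F^\dag_iW_{ij}P_0 = 0$ using $F_iP_0 = 0$ and $P_0F^\dag_i = 0$. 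Finally, since $A^{(\alpha)}$, $B^{(\alpha)}$, $G^{(\alpha)}_i$ are $\mathbb{C}$-linear combinations of $A$, $B$, $G_i$, $F_i$, $W_{ij}$, $F^\dag_iW_{ij}$, $G^\dag_iW_{ij}$, each operator on the displaced $X$- and $Z$-lists of Assumption \ref{as 3} expands into a $\mathbb{C}$-linear combination of operators already on the original lists — which is exactly why those lists are chosen to contain all the products $F_iY^{-1}_1F_j$, $AY^{-1}_1F_i$, $F_iY^{-1}_1F^\dag_lW_{lj}$, and so forth — so each identity $YY^{-1}_1P_1Z'P_0 = P_1Z'P_0$ and $P_0X'P_1Y^{-1}_1Y = P_0X'P_1$ follows by linearity from the undisplaced ones.
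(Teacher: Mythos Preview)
Your proposal is correct and follows essentially the same route as the paper: direct substitution for \eqref{eq lemmaequation1} and \eqref{eq lemmaequation2}, the Weyl-conjugation argument for Assumption~\ref{as 1}, and the observation that the displaced coefficients are $\mathbb{C}$-linear combinations of the original building blocks (with the leftover terms killed by $F_iP_0=0$, $P_0F_i^\dag=0$) for Assumption~\ref{as 3}. The only organizational difference is that the paper reduces \eqref{eq lemmaequation2} to three sub-identities and isolates $L_i^\dag S_{ij} = P_0\big(G_i^\dag W_{ij} - A\, Y_1^{-1} F_i^\dag W_{ij}\big)P_0$ as the single place where the $Y_1^{-1\dag}\to Y_1^{-1}$ trade via $Y+Y^\dag = -F_i^\dag F_i$ and $P_0(A+A^\dag)P_1 = -P_0G_i^\dag F_iP_1$ is needed --- exactly the bookkeeping obstacle you flag.
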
  
\begin{proof}  
To show that Definition \ref{def coefficients limitQSDE} 
holds for the altered coefficients, substitute 
$G_i^{(\alpha)}$ and  $A^{(\alpha)}$ from
Definition \ref{def coefficients displaced}, 
and $L_i^{(\alpha)}$ from Definition 
\ref{def coefficientsQSDEdisplaced} into 
Eq.\ \eqref{eq lemmaequation1}. This gives
  \begin{equation*}
	L_i + \alpha_j S_{i j} = \left(L_i + \alpha_j W_{i j}  
	+ \alpha_j F_i Y_1^{-1} F_l^\dag W_{l j}, \right)P_0 ,
  \end{equation*}
which holds if we substitute 
$S_{i j} = \left(W_{i j} + F_i Y_1^{-1} F_l^\dag W_{l j} \right)P_0$ 
from Definition \ref{def coefficients limitQSDE}. Furthermore, 
substituting $A^{(\alpha)}$ and $B^{(\alpha)}$ from 
Definition \ref{def coefficients displaced},
and $K^{(\alpha)}$ from Definition 
\ref{def coefficientsQSDEdisplaced} into 
Eq.\ \eqref{eq lemmaequation2} gives
  \begin{equation*}\begin{split}
    &\bar{\alpha}_i S_{ij} \alpha_j + 
    \bar{\alpha}_i L_i - \alpha_j L^\dag_i S_{i j}  = 
     P_0\bar{\alpha}_i W_{ij} \alpha_jP_0 + P_0 G_i P_0\bar{\alpha}_i - 
    \alpha_j P_0 G^\dag_i W_{i j}P_0 \\
    &- P_0(F_i \bar{\alpha}_i + A ) 
     Y^{-1}_1 (A- \alpha_j F_l^\dag W_{l j})P_0 + P_0AY_1^{-1}AP_0.
  \end{split}\end{equation*}
This holds if we can show that
\begin{subequations}
  \begin{align}
	&S_{ij} = P_0 \left(W_{i j} + F_i Y_1^{-1} F^\dag_l W_{l j}\right)P_0\label{eq blah1}\\
	&L_i = P_0 \left(G_i - F_i Y_1^{-1}  A\right)P_0\label{eq blah2}\\
	&L^\dag_i S_{i j} = 
	P_0 \left(G^\dag_i W_{i j} - A Y_1^{-1} F_i^\dag W_{i j}\right)P_0.\label{eq blah3}
  \end{align}
 \end{subequations}
Equations \eqref{eq blah1} and \eqref{eq blah2} are satisfied 
by Assumption \ref{as 4} as $P_1 L_i = P_1 S_{i j} =0$.
Note that Eq.\ \eqref{eq blah3} holds if we can show
  \begin{equation*}
  L^\dag_i \left(\delta_{i l} + F_i Y_1^{-1} F^\dag_l\right)W_{lj}P_0 = 
  P_0G^\dag_l W_{lj}P_0 - P_0 A Y_1^{-1}
  F_l^\dag W_{lj}P_0.
 \end{equation*} 
Substituting $L_i$ from Definition \ref{def coefficients limitQSDE}, 
this becomes
  \begin{multline*}
 	-P_0 A^\dag Y_1^{-1 \dag} F_l^\dag W_{lj}P_0 + 
	P_0 G^\dag_i F_i  Y^{-1}_1 F^\dag_lW_{lj}P_0\\ 
	- P_0 A^\dag  Y_1^{-1 \dag} F_i^\dag F_i Y_1^{-1} F^\dag_l W_{lj}P_0 +
	 P_0 A Y_1^{-1} F_l^\dag W_{lj}P_0 =0.
  \end{multline*}
Now recall that  $P_0(A+A^\dag)P_1 = - P_0 G^\dag_i F_i P_1$, 
and $Y + Y^\dag = - F_i^\dag F_i$ 
(see Eq.\ \eqref{eq split}) by Assumptions 
\ref{as 1}, \ref{as 2} and \ref{as 3}. Moreover, 
$Y Y^{-1}_1 P_1F^\dag_l W_{lj} P_0 = P_1F^\dag_l W_{lj} P_0$ 
by Assumption \ref{as 3} which shows that 
Eq.\ \eqref{eq blah3} is satisfied.

We now show that Assumptions \ref{as 1}, \ref{as 2},
\ref{as 3} and \ref{as 4} hold for the altered 
coefficients, with $P_0$ and $Y_1^{-1}$ unchanged. 
Assumption \ref{as 1} holds for the altered 
coefficients since, by Definition \ref{def Utalpha}, 
we have $U^{(k\alpha)}_t = W(f_t)^\dag U_t^{(k)}W(f_t)$ which 
is clearly unitary. By Assumption \ref{as 2} for the original 
coefficients and Definition 
\ref{def coefficientsQSDEdisplaced} and  
\ref{def coefficients displaced}, 
we see that Assumption \ref{as 2}
holds for the altered coefficients.
Assumption \ref{as 3} on the altered coefficients is seen
to hold by direct substitution of the coefficients in
Definition \ref{def coefficientsQSDEdisplaced} and 
\ref{def coefficients displaced}, followed by application
of Assumption \ref{as 3} for the original system.
Assumption \ref{as 4} holds if 
$P_1 L_i^{(\alpha)} = P_1 L_i + \alpha_i P_1 S_{i j} = 0$, 
which follows from Assumption \ref{as 4} on the 
original system.
\end{proof}
Lemma \ref{lem coefficients displaced} shows 
that Proposition \ref{prop convergence} holds 
with $T_t^{(k\alpha)}$ and $T^{(\alpha)}_t$  
replacing $T_t^{(k)}$ and $T_t$, respectively.

\begin{corollary}\label{cor}
Suppose that Assumption \ref{as 1}, \ref{as 2}, 
\ref{as 3} and \ref{as 4} hold. Let $\alpha$ be 
an element of $\mathbb{C}^n$. We have
  \begin{equation*}
      \lim_{k \to \infty} \left\{\sup_{0 \le t \le s} 
    \left\|T_t^{(k\alpha)}(X) - T^{(\alpha)}_t(X)\right\|\right\} = 0,
  \end{equation*}
for all $X \in \mathcal{B}(\mathcal{H}_0)$ and $0\le s < \infty$.    
\end{corollary}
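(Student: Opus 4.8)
The plan is to recognize the statement as Proposition \ref{prop convergence} applied to a system whose coefficients have been ``displaced'' by the Weyl cocycle $W(\alpha_t)$, and to lean on Lemma \ref{lem coefficients displaced} to verify that this displaced system again satisfies all of the standing hypotheses. First I would unwind Definition \ref{def Utalpha}: since $U_t^{(k\alpha)} = W(\alpha_t)^\dagger U_t^{(k)} W(\alpha_t)$ and $U_t^{(\alpha)} = W(\alpha_t)^\dagger U_t W(\alpha_t)$, where $\alpha$ is identified with the constant function truncated beyond time $s$ so that it lies in $L^2(\mathbb{R}^+;\mathbb{C}^n)$, one application of the quantum It\^o rule (conjugating \eqref{eq HuP} and \eqref{eq HuPlimit} by \eqref{eq Weyl}) shows that $U_t^{(k\alpha)}$ and $U_t^{(\alpha)}$ solve the QSDEs \eqref{eq Hupalpha}. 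These are of exactly the form \eqref{eq HuP} and \eqref{eq HuPlimit}, now with the displaced coefficients $K^{(k\alpha)}, L_i^{(k\alpha)}, S^{(k)}_{ij}$ and $K^{(\alpha)}, L_i^{(\alpha)}, S_{ij}$ of Definitions \ref{def coefficientsQSDEdisplaced} and \ref{def coefficients displaced}.

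Next I would appeal to Lemma \ref{lem coefficients displaced}, which carries out the real bookkeeping: it records that $K^{(k\alpha)}$ and $L_i^{(k\alpha)}$ inherit the polynomial-in-$k$ structure of Assumption \ref{as 2} with the same $Y$, that Assumptions \ref{as 1}, \ref{as 3} and \ref{as 4} hold for the displaced coefficients with $P_0$ and $Y_1^{-1}$ unchanged, and --- crucially --- that the displaced limit coefficients $K^{(\alpha)}, L_i^{(\alpha)}, S_{ij}$ are precisely those produced from the displaced family by Definition \ref{def coefficients limitQSDE}. Consequently $T_t^{(k\alpha)}$ and $T_t^{(\alpha)}$ of Definition \ref{def Utalpha} are nothing other than the semigroups of Definition \ref{de semigroups} attached to the QSDEs \eqref{eq Hupalpha}; in particular, because $P_0$ (acting on $\mathcal{H}$) commutes with $W(\alpha_t)$ (acting on $\mathcal{F}$), we still have $U_t^{(\alpha)}P_1 = P_1$, so $U_t^{(\alpha)}$ preserves $\mathcal{H}_0\otimes\mathcal{F}$ and $T_t^{(\alpha)}$ is well defined on $\mathcal{B}(\mathcal{H}_0)$.

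With these identifications in place, Lemma \ref{lem norm continuous} and Proposition \ref{prop convergence} apply verbatim to the displaced pair, and the claimed convergence $\sup_{0\le t\le s}\|T_t^{(k\alpha)}(X) - T_t^{(\alpha)}(X)\| \to 0$ for every $X\in\mathcal{B}(\mathcal{H}_0)$ follows at once. The only point needing any care is the one already handled by Lemma \ref{lem coefficients displaced} together with the It\^o computation behind \eqref{eq Hupalpha}: that displacing by a Weyl operator respects the skew convention of Definition \ref{de semigroups} (limit unitary on the left, the $k$-family on the right), which it does since $U_t$ and $U_t^{(k)}$ are conjugated by the \emph{same} cocycle, and that it preserves the polynomial dependence on $k$. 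Beyond this there is no genuine analytic obstacle; the Corollary is immediate.
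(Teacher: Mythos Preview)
Your proposal is correct and follows exactly the approach the paper takes: the Corollary is stated immediately after the sentence ``Lemma \ref{lem coefficients displaced} shows that Proposition \ref{prop convergence} holds with $T_t^{(k\alpha)}$ and $T^{(\alpha)}_t$ replacing $T_t^{(k)}$ and $T_t$, respectively,'' and your write-up simply unpacks that sentence --- the It\^o computation behind \eqref{eq Hupalpha}, the bookkeeping of Lemma \ref{lem coefficients displaced} verifying Assumptions \ref{as 1}--\ref{as 4} for the displaced coefficients with $P_0$ and $Y_1^{-1}$ unchanged, and then the verbatim application of Lemma \ref{lem norm continuous} and Proposition \ref{prop convergence}. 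There is nothing to add or subtract.
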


\noindent {\bf Proof of Theorem \ref{thm main result}.}
Let $t \ge 0$. Let $f$ be a step function in 
$L^2([0,t];\mathbb{C}^n)$, i.e.\ there 
exists an $m \in \mathbb{N}$ and $0 =t_0 < t_1<\ldots<t_m =t$
and $\alpha_1,\ldots,\alpha_m \in \mathbb{C}^n$ such 
that 
  \begin{equation*}
  s \in [t_{i-1},t_i) \Longrightarrow f(s) = \alpha_i,\qquad 
  \forall i \in \{1,\ldots,m\}.
  \end{equation*}
The cocycle property of solutions to QSDE's and 
the exponential property of the symmetric Fock 
space lead to 
  \begin{equation*}\begin{split}
  &T^{(kf)}_t(X) = T^{(k\alpha_m)}_{t_1}\ldots 
  T^{(k\alpha_1)}_{t-t_{m-1}}(X), \qquad X \in \mathcal{B}(\mathcal{H}),\\ 
  &T^{(f)}_t(X) = T^{(\alpha_m)}_{t_1}\ldots 
  T^{(\alpha_1)}_{t-t_{m-1}}(X), \qquad X \in \mathcal{B}(\mathcal{H}_0).
  \end{split}\end{equation*} 
It is easy to see that Corollary \ref{cor} also holds 
for the difference of a finite product of maps 
$T^{(k\alpha_i)}_{t_i-t_{i-1}}$ and a finite product of maps 
$T^{(\alpha_i)}_{t_i-t_{i-1}}$. This leads to
  \begin{equation*}\begin{split}
  &\lim_{k\to \infty} \left\|T_t^{(kf)}(X) - T_t^{(f)}(X)\right\| = \\
  &\lim_{k\to \infty}\left\|T^{(k\alpha_m)}_{t_1}\ldots 
  T^{(k\alpha_1)}_{t-t_{m-1}}(X)- T^{(\alpha_m)}_{t_1}\ldots 
  T^{(\alpha_1)}_{t-t_{m-1}}(X)\right\| = 0, \qquad X \in \mathcal{B}(\mathcal{H}_0).
  \end{split}\end{equation*}
This immediately yields for all step functions 
$f \in L^2([0,t];\mathbb{C}^n)$ and $v \in \mathcal{H}_0$
  \begin{equation}\label{eq resultnearly}
  \lim_{k\to \infty} U_t^{(k)}v\otimes\pi(f) = U_t v\otimes\pi(f).
  \end{equation}
Note that the step functions are dense in 
$L^2([0,t];\mathbb{C}^n)$. This means that 
Eq.\ \eqref{eq resultnearly} holds for 
all $f\in L^2([0,t];\mathbb{C}^n)$.
Now note that for all 
$f \in L^2(\mathbb{R}^+; \mathbb{C}^n)$ 
and $t \le s \le \infty$, 
we have (e.g.\ \cite{Par92})
  \begin{equation*}
  W(f_s)^\dag U^{(k)}_tW(f_s) = U^{(kf_t)}_t, \qquad W(f_s)^\dag U_tW(f_s) = U^{(f_t)}_t. 
  \end{equation*}
This means that the result in Eq.\ \eqref{eq resultnearly} is 
true for all $f\in L^2(\mathbb{R}^+;\mathbb{C}^n)$. We now 
have 
  \begin{equation*}
  \lim_{k\to\infty} U_t^{(k)}\psi = U_t\psi,
  \end{equation*}
for all $\psi$ in $\mathscr{D} = 
\mbox{span}\{v\otimes\pi(f);\ v\in \mathcal{H}_0, 
f \in L^2(\mathbb{R}^+;\mathbb{C}^n)\}$. Theorem 
\ref{thm main result} then follows from the fact 
that $\mathscr{D}$ is dense in 
$\mathcal{H}_0\otimes\mathcal{F}$ (e.g.\ \cite{Par92}).

\section{Discussion}\label{sec discussion}

In this article we have studied adiabatic 
elimination in the context of the quantum 
stochastic models introduced by Hudson 
and Parthasarathy. We have shown strong 
convergence of a quantum stochastic 
differential equation to its adiabatically 
eliminated counterpart, under four assumptions.
Physically, the first Assumption 
\ref{as 1} enforces the
unitarity of the initial QSDE model.
Assumption \ref{as 2} 
ensures an appropriate scaling in  
the coupling parameter $k$ such 
that we can distinguish excited 
and ground states in our system. 
Assumptions \ref{as 3} and \ref{as 4} 
ensure the existence of a limit
dynamics independent of $k$.  Note 
that Assumption \ref{as 4} specifically forbids any
quantum jumps which terminate in an excited state, 
the presence of which  would preclude the
construction of a valid limit dynamics.

Although a Dyson series expansion for $U^{(k)}_t$ 
(e.g.\ in terms of Maassen kernels \cite{Maa85}) 
would provide a lot of intuition  
for the results we have obtained 
(see \cite{GoV07} and \cite[Chapter 5, Section 4]{Dav80}), 
we have chosen a proof along the lines of semigroups 
and their generators. An infinitesimal 
treatment has the advantage that it 
can exploit the existence of 
results such as the quantum 
It\^o rule \cite{HuP84}, the Trotter-Kato 
Theorem \cite{Tro58,Kat59} and the technique 
due to Kurtz \cite{Kur73}.

{\bf Acknowledgement.}
We thank Mike Armen, Ramon van Handel and 
Hideo Mabuchi for stimulating discussion.
We especially thank Ramon van Handel for 
pointing out mistakes in an earlier version 
of this work.
L.B.\ is supported by the ARO under 
Grant No.\ W911NF-06-1-0378. A.S.\ 
acknowledges support by the ONR under 
Grant No.\ N00014-05-1-0420.

\bibliography{ad}
\end{document}